\let\proof\relax   
\def\maketag@@@#1{\hbox{\m@th\normalfont#1}}
\newtheorem{lemma}{Lemma}
\newtheorem{theorem}{Theorem}
\newtheorem{remark}{Remark}
\newtheorem*{example*}{Example}
\newcommand*{\transpose}{%
  {\mathpalette\@transpose{}}%
}
\newcommand*{\@transpose}[2]{%
  \raisebox{\depth}{$\m@th#1\intercal$}%
}
\begin{document}

\makeatletter
\newcommand{\raisemath}[1]{\mathpalette{\raisem@th{#1}}}
\newcommand{\raisem@th}[3]{\raisebox{#1}{$#2#3$}}
\makeatother

\newcommand{\mstk}{\hspace{-0.145cm}*}

\newcommand{\mstl}{\hspace{-0.105cm}*}

\newcommand{\mstm}{\hspace{-0.175cm}*}

\newcommand{\SB}[3]{
\sum_{#2 \in #1}\biggl|\overline{X}_{#2}\biggr| #3
\biggl|\bigcap_{#2 \notin #1}\overline{X}_{#2}\biggr|
}

\newcommand{\Mod}[1]{\ (\textup{mod}\ #1)}

\newcommand{\overbar}[1]{\mkern 0mu\overline{\mkern-0mu#1\mkern-8.5mu}\mkern 6mu}

\makeatletter
\newcommand*\nss[3]{%
  \begingroup
  \setbox0\hbox{$\m@th\scriptstyle\cramped{#2}$}%
  \setbox2\hbox{$\m@th\scriptstyle#3$}%
  \dimen@=\fontdimen8\textfont3
  \multiply\dimen@ by 4             
  \advance \dimen@ by \ht0
  \advance \dimen@ by -\fontdimen17\textfont2
  \@tempdima=\fontdimen5\textfont2  
  \multiply\@tempdima by 4
  \divide  \@tempdima by 5          
  \ifdim\dimen@<\@tempdima
    \ht0=0pt                        
    \@tempdima=\fontdimen5\textfont2
    \divide\@tempdima by 4          
    \advance \dimen@ by -\@tempdima 
    \ifdim\dimen@>0pt
      \@tempdima=\dp2
      \advance\@tempdima by \dimen@
      \dp2=\@tempdima
    \fi
  \fi
  #1_{\box0}^{\box2}%
  \endgroup
  }
\makeatother

\makeatletter
\renewenvironment{proof}[1][\proofname]{\par
  \pushQED{\qed}%
  \normalfont \topsep6\p@\@plus6\p@\relax
  \trivlist
  \item[\hskip\labelsep
        \itshape
    #1\@addpunct{:}]\ignorespaces
}{%
  \popQED\endtrivlist\@endpefalse
}
\makeatother

\makeatletter
\newsavebox\myboxA
\newsavebox\myboxB
\newlength\mylenA

\newcommand*\xoverline[2][0.75]{%
    \sbox{\myboxA}{$\m@th#2$}%
    \setbox\myboxB\null
    \ht\myboxB=\ht\myboxA%
    \dp\myboxB=\dp\myboxA%
    \wd\myboxB=#1\wd\myboxA
    \sbox\myboxB{$\m@th\overline{\copy\myboxB}$}
    \setlength\mylenA{\the\wd\myboxA}
    \addtolength\mylenA{-\the\wd\myboxB}%
    \ifdim\wd\myboxB<\wd\myboxA%
       \rlap{\hskip 0.5\mylenA\usebox\myboxB}{\usebox\myboxA}%
    \else
        \hskip -0.5\mylenA\rlap{\usebox\myboxA}{\hskip 0.5\mylenA\usebox\myboxB}%
    \fi}
\makeatother

\xpatchcmd{\proof}{\hskip\labelsep}{\hskip3.75\labelsep}{}{}

\pagestyle{empty}

\title{\fontsize{22.5}{28}\selectfont  Achieving Capacity of PIR with Private Side Information\\ with Low Sub-packetization and without MDS Codes}

\author{Leila Erhili and Anoosheh Heidarzadeh\thanks{Leila Erhili is with the Department of Computer Science and Engineering, Santa Clara University, Santa Clara, CA 95053 USA (E-mail: lerhili@scu.edu). Anoosheh Heidarzadeh is with the Department of Electrical and Computer Engineering, Santa Clara University, Santa Clara, CA 95053 USA (E-mail: aheidarzadeh@scu.edu).}
}

\maketitle 

\thispagestyle{empty}

\begin{abstract}
This paper revisits the problem of multi-server Private Information Retrieval with Private Side Information (PIR-PSI). 
In this problem, $N$ non-colluding servers store identical copies of $K$ messages, each comprising $L$ symbols from $\mathbbmss{F}_q$, 
and 
a user, who knows $M$ of these messages, wants to retrieve one of the remaining $K-M$ messages. 
The user's goal is to retrieve the desired message by downloading the minimum amount of information from the servers while revealing no information about the identities of the desired message and side information messages to any server. 
The capacity of PIR-PSI, defined as the maximum achievable download rate, was previously characterized for all $N$, $K$, and $M$ when $L$ and $q$ are sufficiently large---specifically, growing exponentially with $K$, to ensure the divisibility of each message into $N^K$ sub-packets and to guarantee the existence of an MDS code with its length and dimension being exponential in $K$. 
In this work, we propose a new capacity-achieving PIR-PSI scheme that is applicable to all $N$, $K$, $M$, $L$, and $q$ where $N\geq M+1$ and $N-1\mid L$. 
The proposed scheme operates with a sub-packetization level of $N-1$, independent of $K$, and works over any finite field without requiring an MDS code. 
\end{abstract}

\section{Introduction}
In the classical setting of the Private Information Retrieval (PIR) problem, 
a user wishes to retrieve one message from a dataset of messages with copies stored on a single or multiple remote servers. 
The objective is to design a retrieval mechanism that minimizes the amount of downloaded information while revealing no information about the identity of the desired message. 
While achieving privacy in the single-server setting requires downloading the entire dataset~\cite{CGKS1995}, 
the multi-server setting offers more efficient solutions in terms of download rate~\cite{SJ2017,SJ2016ArbitraryTIFS,SJ2018Multiround,TSC2019,VBU2022}. 
This has ignited extensive studies on various multi-server PIR settings in recent years, 
including 
PIR with colluding servers~\cite{SJ2018Colluding,BU2019Colluding,ZX2019,YLK2020,LJJ2021,HFLH2022,ZTSP2022}, 
PIR with coded databases~\cite{TGKHHER2017,BU18,BAWU2020}, and multi-message PIR (MPIR)~\cite{BU2018,WHS2022}. 

In parallel to research on multi-server settings, a separate line of work has emerged to improve the efficiency of the retrieval process in the single-server setting. 
Within this line of research, 
the main idea has been to provide the user with a prior side information about the dataset. 
Several variations of single-server PIR with side information have been studied in the literature, 
see, e.g.,~\cite{KGHERS2017No0,HKGRS2018,LG2018,HKS2018,HKS2019,HKRS2019,HKS2019Journal,KKHS32019,KHSO2021,HS2021,HS2022Reuse,LJ2022,GLH2022,HS2022LinCap}. 
The application of side information has also been extended to multi-server settings. 
This includes 
cache-aided PIR~\cite{T2017,WBU2018,WBU2018No2,ZWSJC2021}, 
multi-server PIR with side information (PIR-SI)~\cite{KKHS12019,KGHERS2020,LG2020CISS,WHS2024}, 
multi-server PIR with private side information (PIR-PSI)~\cite{KKHS22019,CWJ2020,KH2021Journal}, and
multi-server MPIR with private side information (MPIR-PSI)~\cite{SSM2018}.

In this work, we revisit the multi-server PIR-PSI problem---simply referred to as PIR-PSI---which was initially studied in~\cite{CWJ2020} and later explored in~\cite{KH2021Journal}. 
In this problem, there are $N$ non-colluding servers, storing identical copies of $K$ messages, each of which consists of $L$ symbols from $\mathbbmss{F}_q$, and there is a user who wants to retrieve one of these messages. 
The user initially knows a subset of $M$ out of $K$ messages---distinct from the desired message---as a prior side information, and the servers are initially unaware of the identities of these messages. 
The user's goal is to retrieve the desired message by downloading the minimum amount of information from the servers 
while preserving the privacy of both the desired message's identity and the identities of the side information messages against the servers. 

Chen \emph{et al.} in~\cite{CWJ2020} established a converse bound on the capacity of PIR-PSI---defined as the maximum achievable download rate over all PIR-PSI schemes---for all values of $N$, $K$, $M$, $L$, and $q$. 
They also showed that the converse bound is tight when $L$ and $q$ are sufficiently large. 
In particular, the achievability scheme presented in~\cite{CWJ2020} relies on the following two assumptions: 
the message length $L$ is an integer multiple of $N^K$, guaranteeing that each message can be divided into $N^K$ sub-packets of equal size; and 
the field size $q$ is sufficiently large---scaling with $N^K$---ensuring the existence of an MDS code of length $(2N^K-N^M-1)/(N-1)$ and dimension $(N^K-1)/(N-1)$. 
This implies that the sub-packetization level and the field size must grow exponentially as the number of messages increases. 
Such exponential growth, however, may impede the application of this scheme in practice. 
Motivated by this, 
Krishnan and Harshan introduced a scheme in~\cite{KH2021Journal} that operates over the binary field, eliminating the need for an MDS code. 
However, the scheme in~\cite{KH2021Journal} relies on a sub-packetization level growing exponentially with $K$, is limited  to the case of $N=2$ and $M=2$, and is not capacity-achieving.   

In this work, we propose a new capacity-achieving PIR-PSI scheme that is applicable to all values of $N$, $K$, $M$, $L$, and $q$ where $N\geq M+1$ and $N-1\mid L$.
Our scheme works with a constant sub-packetization level regardless of $K$ and operates over any finite field without requiring an MDS code. 
In this scheme, the user downloads a single coded sub-packet---the sum of a specific subset of message sub-packets---from each server and  
recovers each sub-packet of the desired message from one distinct server. 
To determine the number and selection of message sub-packets contributing to each coded sub-packet, we employ a randomized algorithm. 
This algorithm utilizes a carefully designed non-uniform probability distribution to guarantee an equal probability for all coded sub-packets with the same number of message sub-packets.

\section{Problem Setup}
We represent random variables and their realizations by bold-face symbols and regular symbols, respectively. 
For an integer $i\geq 1$, 
the set $\{1,\dots,i\}$ is denoted by $[i]$, and for integers $i,j\geq 1$ such that $i<j$, 
the set $\{i,\dots,j\}$ is denoted by $[i:j]$. 
We denote the positive part of $i$ by $i^{+}$, i.e., $i^{+} = i$ if $i\geq 0$, and $i^{+}=0$ otherwise. 
We denote the $i$th component and the support of a vector $\mathrm{v}$ by $\mathrm{v}(i)$ and $\mathrm{supp}(\mathrm{v})$, respectively. 
For ease of notation, 
we define $\binom{n}{k} := 0$ for $k<0$ or $k>n$. 

Consider $N>1$ servers, each storing an identical copy of $K>1$ messages ${\mathrm{X}_1,\dots,\mathrm{X}_K}\in \mathbbmss{F}^L_q$, 
where the random variables $\mathbf{X}_1,\dots,\mathbf{X}_K$ are independent and uniformly distributed over $\mathbbmss{F}^{L}_{q}$. 
Here, $\mathbbmss{F}^{L}_q$ is the $L$-dimensional vector space for $L\geq 1$ over a finite field $\mathbbmss{F}_q$ for a prime power $q\geq 2$. 
That is, each message $\mathrm{X}_i$ consists of $L$ symbols from $\mathbbmss{F}_p$. 

Consider a user who wants to retrieve the message $\mathrm{X}_{\mathrm{W}}$ for some ${\mathrm{W}\in [K]}$, 
where the random variable $\mathbf{W}$ is uniformly distributed over ${[K]}$.
Suppose that the user knows $M\geq 1$ messages ${\mathrm{X}_{\mathrm{S}}:=\{\mathrm{X}_i\}_{i\in \mathrm{S}}}$ for some $M$-subset ${\mathrm{S}\subset [K]\setminus \{\mathrm{W}\}}$, 
where the random variable $\mathbf{S}$ is uniformly distributed over all $M$-subsets of $[K]\setminus \{\mathrm{W}\}$.
We refer to $\mathrm{X}_{\mathrm{W}}$ as the \emph{user's demand}, 
$\mathrm{X}_{\mathrm{S}}$ as the \emph{user's side information}, and 
$\{\mathrm{X}_i\}_{i\in [K]\setminus \{\mathrm{W}\}\cup \mathrm{S}}$ as the \emph{interference messages}.  

For each ${n\in [N]}$, the user generates a query $\mathrm{Q}_n^{[\mathrm{W},\mathrm{S}]}$, and sends it to server $n$. 
Each query is a deterministic or stochastic function of $\mathrm{W}$ and $\mathrm{S}$, and hence the superscript $[\mathrm{W},\mathrm{S}]$. 
The queries may depend on the user's side information but not on any other messages that are unavailable to the user. 
Moreover, for each $n\in [N]$, the query $\mathrm{Q}_n^{[\mathrm{W},\mathrm{S}]}$ must keep $(\mathrm{W},\mathrm{S})$ private from server $n$, i.e.,
\begin{align}\label{eq:PC}
I(\mathbf{W},\mathbf{S}; \mathbf{Q}_n^{[\mathbf{W},\mathbf{S}]}) = 0.
\end{align}
This requirement is referred to as the \emph{privacy condition}.
 
Subsequently, each server $n$ generates an answer $\mathrm{A}_n^{[\mathrm{W},\mathrm{S}]}$ and sends it to the user. 
The answer is a deterministic function of the query and messages, i.e., 
${H(\mathbf{A}_n^{[\mathrm{W},\mathrm{S}]}|\mathbf{Q}_n^{[\mathrm{W},\mathrm{S}]},\mathbf{X}_{[K]})=0}$. 
Moreover, the user's demand $\mathrm{X}_{\mathrm{W}}$ must be recoverable given 
${\mathrm{A}^{[\mathrm{W},\mathrm{S}]}_{[N]}:=\{\mathrm{A}_n^{[\mathrm{W},\mathrm{S}]}\}_{n\in [N]}}$, ${\mathrm{Q}^{[\mathrm{W},\mathrm{S}]}_{[N]}:=\{\mathrm{Q}_n^{[\mathrm{W},\mathrm{S}]}\}_{n\in [N]}}$, and the user's side information $\mathrm{X}_{\mathrm{S}}$, 
i.e., 
\begin{equation}\label{eq:RC}
H(\mathbf{X}_{\mathrm{W}}| \mathbf{A}_{[N]}^{[\mathrm{W},\mathrm{S}]},\mathbf{Q}_{[N]}^{[\mathrm{W},\mathrm{S}]},\mathbf{X}_{\mathrm{S}})=0.
\end{equation}
This requirement is referred to as the \emph{recoverability condition}.

The problem is to design a scheme for generating the queries ${\mathrm{Q}_{[N]}^{[\mathrm{W},\mathrm{S}]}}$ and the corresponding answers ${\mathrm{A}_{[N]}^{[\mathrm{W},\mathrm{S}]}}$ for any given $(\mathrm{W},\mathrm{S})$ such that both the privacy and recoverability conditions defined in~\eqref{eq:PC} and~\eqref{eq:RC} are satisfied.
This problem is known as \emph{multi-server Private Information Retrieval with Private Side Information}, commonly referred to as \emph{PIR-PSI}. 

The \emph{capacity} of PIR-PSI is defined as the maximum achievable rate among all PIR-PSI schemes, where the \emph{rate} of a PIR-SI scheme is defined as the ratio of the number of bits required by the user, 
i.e., $H(\mathbf{X}_{\mathrm{W}})$, 
to the expected number of bits downloaded from all servers, 
i.e., ${\sum_{n\in [N]} H(\mathbf{A}^{[\mathrm{W},\mathrm{S}]}_n|\mathbf{Q}^{[\mathrm{W},\mathrm{S}]}_n)}$, where the expectation is taken over all possible query realizations. 

Our goal in this work is to characterize the capacity of PIR-PSI  in terms of the parameters $N$, $K$, $M$, $L$, and $q$. 

\section{Main Results}
In this section, we present our main results on the capacity of the PIR-PSI problem.  

\begin{theorem}\label{thm:1}
The capacity of PIR-PSI with $N$ servers, $K$ messages, and $M$ side information messages, where each message is composed of $L$ symbols from $\mathbbmss{F}_q$, is given by 
\begin{equation}\label{eq:R}
R := \frac{(N^{K-M}-N^{K-M-1})}{(N^{K-M}-1)}, 
\end{equation} when $N\geq M+1$ and $N-1\mid L$.
\end{theorem}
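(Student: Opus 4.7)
The plan is to prove the theorem in two stages. The converse bound $R \leq (N^{K-M}-N^{K-M-1})/(N^{K-M}-1)$ is already established for all parameters in~\cite{CWJ2020}, so it remains to construct an achievability scheme for the regime $N \geq M+1$ and $N-1 \mid L$ that attains this bound while using a sub-packetization level of only $N-1$ (independent of $K$) and working over an arbitrary finite field $\mathbbmss{F}_q$ without requiring an MDS code.

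First I would set up the scheme as sketched in Section I. Since $N-1 \mid L$, split every message $\mathrm{X}_i$ into $N-1$ equal sub-packets $\mathrm{X}_i^{(1)},\dots,\mathrm{X}_i^{(N-1)} \in \mathbbmss{F}_q^{L/(N-1)}$. The query sent to server $n$ specifies, for each message index $i \in [K]$, either a sub-packet index in $[N-1]$ or a flag meaning ``do not include''; the server returns the $\mathbbmss{F}_q^{L/(N-1)}$-valued sum of the selected sub-packets (or an empty answer if no sub-packets are selected). Designing the randomized query distribution is the crux: for each $j \in [N-1]$ we must dedicate exactly one server to recovering $\mathrm{X}_{\mathrm{W}}^{(j)}$, while ensuring that (i) the $M$ side-information messages together with the $N$ answers yield enough linear equations to cancel every interference term from the coded sum in which $\mathrm{X}_{\mathrm{W}}^{(j)}$ appears, and (ii) the marginal distribution of each query, viewed by any single server, is independent of $(\mathrm{W},\mathrm{S})$.

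Next I would verify the three required properties in order. For privacy, the key is to show that the randomized procedure produces marginal query distributions on the server side that depend only on structural ``types''---essentially the histogram of sub-packet indices chosen across the $K$ messages---and not on which specific coordinates correspond to $\mathrm{W}$ or to elements of $\mathrm{S}$. The natural tool is a symmetrization argument: any permutation of $[K]$ that fixes $\{\mathrm{W}\} \cup \mathrm{S}$ set-wise should leave the induced distribution on $\mathbf{Q}_n^{[\mathbf{W},\mathbf{S}]}$ invariant, which combined with uniform averaging over $(\mathrm{W},\mathrm{S})$ yields~\eqref{eq:PC}. For recoverability~\eqref{eq:RC}, I would argue by induction on the sub-packet index $j$: the coded answer from the server responsible for $\mathrm{X}_{\mathrm{W}}^{(j)}$ contains $\mathrm{X}_{\mathrm{W}}^{(j)}$ plus a combination whose summands are either side-information sub-packets (directly known from $\mathrm{X}_{\mathrm{S}}$) or interference sub-packets that match, up to sign, combinations appearing in answers already decoded. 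This is essentially a collapsed version of the Sun--Jafar decoding chain, restricted to $N-1$ sub-packet levels rather than $N^K$; the requirement $N \geq M+1$ enters here, since each sub-packet level must admit at least one server whose coded sum consists entirely of interference from messages in $[K]\setminus (\{\mathrm{W}\}\cup \mathrm{S})$ together with one fresh copy of $\mathrm{X}_{\mathrm{W}}^{(j)}$.

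Finally, for the rate, I would compute the expected number of non-empty coded sub-packets returned across the $N$ servers. By the design of the probability distribution, assigning weight to each query ``type'' proportional to carefully chosen factors involving $\binom{K-M-1}{\cdot}$ and powers of $N-1$, each server independently returns an answer of size $L/(N-1)$ symbols with probability $(N^{K-M}-1)/N^{K-M}$ and nothing otherwise. Multiplying by $N$ yields total expected download $NL(N^{K-M}-1)/\bigl(N^{K-M}(N-1)\bigr)\log_2 q$ bits, which together with $H(\mathbf{X}_{\mathrm{W}})=L\log_2 q$ gives the desired rate $R$ in~\eqref{eq:R}. The main obstacle I expect is finding a single probability distribution on queries that simultaneously achieves symmetric marginals for privacy, a valid recoverability chain placing one sub-packet of $\mathrm{X}_{\mathrm{W}}$ at each of $N-1$ distinct servers, and the precise expected download dictated by the capacity; the ``non-uniform probability distribution'' advertised in the introduction is tuned precisely to balance these three constraints, and verifying the combinatorial identities behind its consistency will be where the bulk of the technical work lies.
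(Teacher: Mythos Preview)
Your high-level framing (converse from~\cite{CWJ2020}, achievability via $N-1$ sub-packets, one coded sum per server, rate via the probability of a non-empty answer) matches the paper. However, two concrete mechanisms you sketch diverge from what actually makes the scheme work, and one of them would not go through as written.

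First, recoverability in the paper is \emph{not} an inductive Sun--Jafar chain. Every server's coded sum has the form $\mathrm{u}_{m+1}=\mathrm{a}_m+\mathrm{b}\,(\text{or }\mathrm{b}_m)+\mathrm{c}$, where $\mathrm{c}$ is the \emph{same} interference vector for all $N$ servers, and the single extra server carries $\mathrm{u}_1=\mathrm{b}_0+\mathrm{c}$. Decoding each demand sub-packet is a one-step subtraction $\mathrm{Z}_{m+1}-\mathrm{Z}_1$ followed by removal of side-information terms; nothing depends on ``answers already decoded.'' Correspondingly, the condition $N\geq M+1$ does not enter the recoverability argument at all. It enters only in the \emph{privacy} analysis: the explicit non-uniform distribution $P_{i,j}$ (equation~\eqref{eq:Pij}, built from the coefficients $m_{i,j}$ in~\eqref{eq:mij}) is a valid probability distribution precisely when $N\geq M+1$, and this is where the hypothesis is used (Lemma~\ref{lem:Pij}).

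Second, your symmetrization argument for privacy is too weak. Invariance under permutations fixing $\{\mathrm{W}\}\cup\mathrm{S}$, even combined with averaging over $(\mathrm{W},\mathrm{S})$, does not yield $I(\mathbf{W},\mathbf{S};\mathbf{Q}_n)=0$; what is needed is that $\mathbb{P}(\mathbf{Q}_n=\mathrm{v}\mid \mathrm{W},\mathrm{S})$ is literally the same for \emph{every} choice of $(\mathrm{W},\mathrm{S})$. The paper achieves this via a case analysis on the support size $s$ of $\mathrm{v}$ and on whether $\mathrm{W}\in\mathrm{supp}(\mathrm{v})$ and how many indices of $\mathrm{S}$ lie in $\mathrm{supp}(\mathrm{v})$, showing that the resulting probabilities collapse to a quantity depending only on $s$. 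This collapse hinges on the specific form of $P_{i,j}$ together with the Bernoulli mixing parameter $P_I=(M-I)/(M-I+1)$ and the combinatorial identity of Lemma~\ref{lem:Summij}; a generic symmetric distribution would not equalize the cases. You correctly anticipate that the combinatorial identities are ``where the bulk of the technical work lies,'' but your proposal does not yet contain the structural ingredients (the common-interference subtraction template and the $(I,J,\theta)$ randomization) that make those identities line up.
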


The proof of converse follows directly from the result of~\cite[Theorem~1]{CWJ2020} which states that the capacity of PIR-PSI is upper bounded by $R$, as defined in~\eqref{eq:R}, for all values of $N$, $K$, $M$, $L$, and $q$. 
To establish the achievability of the rate $R$, we propose a novel PIR-PSI scheme, 
applicable to all values of $N$, $K$, $M$, $L$, and $q$ where ${N\geq M+1}$ and ${N-1 \mid L}$. 

In the proposed scheme, each message is partitioned into ${N-1}$ sub-packets of equal size. 
Each server provides the user with the sum of a specific subset of message sub-packets,  
and the user recovers each sub-packet of the demand message from the sum retrieved from a distinct server. 
To satisfy the privacy and recoverability conditions, the scheme employs a carefully designed randomized algorithm, specifying the contributions of sub-packets from different messages to each retrieved sum. 

\begin{remark}\normalfont
As was shown in~\cite{CWJ2020}, the rate $R$ is achievable for all $N$, $K$, and $M$ when $L$ and $q$ are sufficiently large. 
Specifically, the scheme presented in~\cite{CWJ2020} requires that each message comprises $N^K$ symbols from $\mathbbmss{F}_q$, 
resulting in a sub-packetization level that grows exponentially with $K$. 
Additionally, the scheme of~\cite{CWJ2020} relies on an MDS code with its length and dimension growing exponentially with $K$. 
This implies that the field size $q$ must also grow exponentially with $K$.
In contrast, our scheme operates with a sub-packetization level of $N-1$, independent of $K$, and works over any finite field, including the binary field, without requiring an MDS code. 
\end{remark}

\begin{remark}\normalfont
For cases in which $L$ does not grow with $K$, and $q$ is as small as $2$, when $N\leq M$ or $N-1\nmid L$, 
the capacity generally remains open. 
Although details are omitted here, when $N=2$, $M=2$, $L=1$, and $q=2$, 
it can be shown that for $K=3$, 
the capacity is equal to $1$, 
which coincides with the rate $R$, 
while for $K=4$, the capacity is equal to $5/8$, 
which is less than $R = 2/3$. 
This proves that the converse bound is not tight in general. 
While the proof of achievability of these rates is based on an extension of our scheme---not presented here, a general achievability scheme remains unknown.  
\end{remark}

\begin{remark}\normalfont
While randomized schemes were previously employed in the contexts of PIR~\cite{TSC2019} and PIR with non-private side information (PIR-SI)~\cite{WHS2022,WHS2024}, 
our randomized scheme incorporates two unique features. 
In contrast to the PIR scheme presented in~\cite{TSC2019}, our scheme generates queries based on a non-uniform probability distribution, dependent on the support size of each query. 
Additionally, our scheme ensures that all possible queries of certain support sizes are generated with equal probability, a condition not required in the PIR-SI schemes presented in~\cite{WHS2022} and~\cite{WHS2024}. 
\end{remark}

\section{A New PIR-PSI Scheme}
In this section, we present a new PIR-PSI scheme, applicable to all values of $N$, $K$, $M$, $L$, and $q$ where ${N\geq M+1}$ and ${N-1\mid L}$. 
An illustrative example of the proposed scheme can be found in Appendix~\ref{sec:Example}. 

The proposed scheme operates on message sub-packets, where each message $\mathrm{X}_i$ is divided into $N-1$ sub-packets $\mathrm{X}_{i,1},\dots,\mathrm{X}_{i,N-1}$, each composed of $L/(N-1)$ symbols from $\mathbbmss{F}_q$. 
Specifically, the user's query to each server $n$ is a vector $\mathrm{v}_n\in {[0:N-1]^K}$. 
If the vector $\mathrm{v}_n$ is 
nonzero, 
server $n$ provides the user with an answer in the form of a sum that includes the message sub-packet $\mathrm{X}_{i,\mathrm{v}_n(i)}$ for each $i\in [K]$, where $\mathrm{X}_{i,0}$ is defined as a degenerate sub-packet, composed of $L/(N-1)$ zeros. 
Otherwise, if the vector $\mathrm{v}_n$ is all-zero, 
server $n$ does not send an answer back to the user. 

\vspace{0.125cm}
\textbf{Query Generation:} 
Given the demand's index $\mathrm{W}$ and the side information's index set $\mathrm{S}$, the user generates the query vectors 
$\mathrm{v}_1,\dots,\mathrm{v}_N$ using an algorithm outlined below, and 
sends each vector $\mathrm{v}_n$ to server $n$ as the query $\mathrm{Q}^{[\mathrm{W},\mathrm{S}]}_n$. 

For a randomly chosen permutation $\pi$ of $[N-1]$, 
the user constructs $N-1$ vectors $\mathrm{a}_1,\dots,\mathrm{a}_{N-1}\in {[0:N-1]^K}$ such that for each $m\in [N-1]$, $\mathrm{a}_m(\mathrm{W}) = \pi(m)$, and $\mathrm{a}_m(i) = 0$ for all $i\in [K]\setminus \{\mathrm{W}\}$. 
Each vector $\mathrm{a}_m$ corresponds to one distinct sub-packet of the demand message. 

The user then randomly generates a vector $\mathrm{b}\in [0:N-1]^{K}$ with 
${\mathrm{supp}(\mathrm{b}) = \mathrm{S}}$. 
The vector $\mathrm{b}$ corresponds to a sum of $M$ sub-packets, each belonging to one side information message.

Next, the user randomly selects a pair $(I,J)$ from the set ${[0:M]\times [0:K-M-1]}$, where the probability of selecting each pair $(i,j)$ is equal to $P_{i,j}$ defined as
\begin{equation}\label{eq:Pij}
P_{i,j} := \binom{M}{i}\binom{K-M-1}{j} \frac{m_{i,j}}{N^{K-M-1}}
\end{equation} 
for all $0\leq i\leq M$ and $0\leq j\leq K-M-1$, 
where $m_{0,0} := 1$, 
$m_{i,j} :=0$ if $1\leq i+j\leq M$, 
and 
\begin{equation}\label{eq:mij}
m_{i,j} := \sum_{k=0}^{i+j-M-1}(-1)^k \binom{M+k-1}{k}(N-1)^{i+j-M-k}
\end{equation}
if $M+1\leq i+j\leq K-1$. 
It should be noted that $P_{i,j}$'s form a probability distribution, as stated in the following lemma. 
The proof of this lemma can be found in Appendix~\ref{sec:ProofsLem1Lem4}.  

\begin{lemma}\label{lem:Pij}
For $P_{i,j}$'s defined in~\eqref{eq:Pij}, 
$P_{i,j}\geq 0$ for all $0\leq i\leq M$ and $0\leq j\leq K-M-1$, 
and 
$\sum_{i=0}^{M}\sum_{j=0}^{K-M-1} P_{i,j} = 1$. 
\end{lemma}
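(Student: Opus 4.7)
The plan is to address non-negativity and the sum identity separately, organized around the observation that $m_{i,j}$ depends only on $t:=i+j-M$ whenever $M+1\le i+j\le K-1$. Writing $f(t)$ for this common value and $u := N-1$, a direct convolution of the geometric series $\sum_{s\ge 1} u^s x^s$ with $(1+x)^{-M}=\sum_{k\ge 0}(-1)^k\binom{M+k-1}{k}x^k$ matches the defining formula term by term, giving the generating function
\[
F(x) \;:=\; \sum_{t\ge 1} f(t)\,x^t \;=\; \frac{ux}{(1-ux)(1+x)^M}.
\]
Clearing $(1-ux)$ and equating coefficients then produces the recurrence $f(t) = u\,f(t-1) + u(-1)^{t-1}\binom{M+t-2}{t-1}$ with $f(0)=0$, which drives both halves of the proof.

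For non-negativity, I would induct on $t$. The odd-$t$ case is immediate, since both terms of the recurrence are non-negative. For even $t\ge 2$, the subtracted binomial must be dominated; I would apply the recurrence once more at the odd index $t-1$ and drop the non-negative $u\,f(t-2)$ term, reducing the question to the inequality $u \ge (M+t-2)/(t-1)$. A short check shows the right-hand side is non-increasing in $t\ge 2$ and attains its maximum $M$ at $t=2$, so the hypothesis $N\ge M+1$ (equivalently $u\ge M$) closes the induction.

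For the sum identity, I would group terms by $s := i+j$ and invoke Vandermonde, $\sum_{i+j=s}\binom{M}{i}\binom{K-M-1}{j}=\binom{K-1}{s}$, so that with $r := K-M-1$ the quantity $\sum_{i,j} P_{i,j}$ becomes $N^{-r}\bigl[1+\sum_{t=1}^{r}\binom{M+r}{M+t}\,f(t)\bigr]$. Rewriting $\binom{M+r}{M+t}=\binom{M+r}{r-t}$ identifies the inner sum as $[x^r]\,F(x)(1+x)^{M+r} = [x^r]\,\tfrac{ux(1+x)^r}{1-ux}$; a final binomial expansion gives $\sum_{j=1}^{r}\binom{r}{j}u^j = N^r - 1$, so the bracket collapses to $N^r$ and the overall sum is $1$.

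The main obstacle is the even-$t$ step in the non-negativity argument. The recurrence delivers a negative binomial whose magnitude grows with $t$, so the naive use of $f(t-1)\ge 0$ is too weak; only by invoking the recurrence a second time does the hypothesis $N\ge M+1$ enter tightly, through the chain $u\ge M\ge (M+t-2)/(t-1)$. The generating-function setup and the sum identity are, by comparison, routine coefficient-extraction calculations once the closed form of $F(x)$ is available.
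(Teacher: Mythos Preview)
Your proposal is correct, and it takes a genuinely different---and considerably cleaner---route than the paper, especially for the sum identity.

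For non-negativity, the two arguments are close in spirit: both reduce to the inequality $N-1\ge M$. The paper works directly with the alternating sum defining $m_{i,j}$, grouping consecutive terms $k=2l,2l+1$ into pairs and showing each pair is non-negative; you instead package the same sum into the recurrence $f(t)=u\,f(t-1)+u(-1)^{t-1}\binom{M+t-2}{t-1}$ and apply it twice at even $t$. Either works with essentially the same effort.

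For the sum identity the approaches diverge sharply. The paper fixes $j$ and proves the ``column'' identity $\sum_i \binom{M}{i} m_{i,j}=(N-1)^j$, which it then sums in $j$ via the binomial theorem. Establishing that column identity requires a case split on whether $K-M-1\le M$, a sequence of telescoping and index-shift manipulations, and two auxiliary lemmas (a forward-difference vanishing result, Lemma~\ref{lem:SumPolyDeg}, and a specific binomial-sum identity, Lemma~\ref{lem:Sum}). Your argument instead fixes $s=i+j$, collapses the inner sum by Vandermonde to $\binom{K-1}{s}$, and then reads off the total as a single coefficient of the closed-form generating function $F(x)(1+x)^{K-1}=ux(1+x)^{K-M-1}/(1-ux)$. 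This avoids all casework and both auxiliary lemmas; the only thing the paper's longer route buys is the intermediate identity $\sum_i\binom{M}{i}m_{i,j}=(N-1)^j$, which is not used elsewhere in the paper.
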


For a randomly chosen $I$-subset ${\mathrm{R}\subseteq \mathrm{S}}$, the user forms 
a vector $\mathrm{b}_0\in [0:N-1]^K$ with 
$\mathrm{b}_0(i) = \mathrm{b}(i)$ for all $i\in \mathrm{R}$ and $\mathrm{b}_0(i)=0$ otherwise. 
If $I=0$, the vector  $\mathrm{b}_0$ is all-zero;
otherwise, 
$\mathrm{b}_0$ corresponds to a sum of $I$ sub-packets, each belonging to one side information message.
If $I\neq 0$, 
the user also forms 
$I$ vectors $\mathrm{b}_1,\dots,\mathrm{b}_I\in [0:N-1]^K$ such that for each $m\in [I]$,  
$\mathrm{b}_m(i) = \mathrm{b}_0(i)$ for all $i\in \mathrm{R}_m$ and $\mathrm{b}_m(i)=0$ otherwise, 
where  
$\mathrm{R}_1,\dots,\mathrm{R}_I$ are randomly ordered $(I-1)$-subsets of $\mathrm{R}$. 
Each vector $\mathrm{b}_m$  corresponds to a sum of $I-1$ sub-packets, each belonging to one side information message.  

Then, for a randomly chosen $J$-subset ${\mathrm{T}\subseteq [K]\setminus \{\mathrm{W}\}\cup \mathrm{S}}$, 
the user randomly generates a vector ${\mathrm{c}\in [0:N-1]^{K}}$ with ${\mathrm{supp}(\mathrm{c}) = \mathrm{T}}$. 
If $J=0$, 
the vector $\mathrm{c}$ is all-zero; 
otherwise, $\mathrm{c}$ corresponds to a sum of $J$ sub-packets, each associated with one interference message.  

Next, the user randomly selects ${\theta\in \{0,1\}}$, 
where the probability of selecting $0$ is equal to $P_I$ defined as 
\begin{equation}\label{eq:PI}
P_I := \frac{M-I}{M-I+1}.
\end{equation} 

Given $\{\mathrm{a}_m\}_{m\in [N-1]}$, $\mathrm{b}$, $\mathrm{b}_0$, $\{\mathrm{b}_m\}_{m\in [I]}$, $\mathrm{c}$, and $\theta$, 
the user forms
$N$ vectors $\mathrm{u}_1,\dots,\mathrm{u}_N\in [0:N-1]^K$ as follows: 

\vspace{0.125cm}
\noindent If $\theta=0$:  
\[
\begin{array}{ll}
\mathrm{u}_{1} = {\mathrm{b}_{0}+\mathrm{c}} \\
\mathrm{u}_{m+1} = \mathrm{a}_{m}+\mathrm{b}+\mathrm{c} & \quad \forall m\in [N-1];
\end{array}
\]
Otherwise: 
\[
\begin{array}{ll}
\mathrm{u}_{1} = {\mathrm{b}_{0}+\mathrm{c}}\\
\mathrm{u}_{m+1} = {\mathrm{a}_{m}+\mathrm{b}_m+\mathrm{c}} & \quad \forall m\in [I],\\
\mathrm{u}_{m+1} = \mathrm{a}_{m}+\mathrm{b}+\mathrm{c} & \quad \forall m\in [I+1:N-1].
\end{array}
\]

Given the vectors $\{\mathrm{u}_m\}_{m\in [N]}$ and a randomly chosen permutation $\sigma$ of $[N]$, 
the user then constructs the vectors $\{\mathrm{v}_n\}_{n\in [N]}$ by setting $\mathrm{v}_n = \mathrm{u}_{\sigma(n)}$ for all $n\in [N]$. 

\begin{lemma}\label{lem:PC}
Every query $\mathrm{Q}_n^{[\mathrm{W},\mathrm{S}]}$ generated by the proposed scheme satisfies the privacy condition. 
\end{lemma}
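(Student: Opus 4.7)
The plan is to show that the marginal distribution of the query $\mathrm{Q}_n^{[\mathrm{W},\mathrm{S}]}=\mathrm{v}_n=\mathrm{u}_{\sigma(n)}$ is invariant under $(\mathrm{W},\mathrm{S})$. Because $\sigma$ is a uniform random permutation of $[N]$ independent of all other randomness, we have $\Pr[\mathrm{v}_n=\mathrm{v}\mid\mathrm{W},\mathrm{S}]=\tfrac{1}{N}\sum_{m=1}^{N}\Pr[\mathrm{u}_m=\mathrm{v}\mid\mathrm{W},\mathrm{S}]$, and it suffices to prove that this sum depends only on $t:=|\mathrm{supp}(\mathrm{v})|$ for every $\mathrm{v}\in[0:N-1]^K$. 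I would then split the analysis according to whether $\mathrm{W}\in\mathrm{supp}(\mathrm{v})$.

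If $\mathrm{W}\notin\mathrm{supp}(\mathrm{v})$, then $\mathrm{a}_m(\mathrm{W})=\pi(m)\neq 0$ forces $\mathrm{W}\in\mathrm{supp}(\mathrm{u}_{m+1})$ for every $m\in[N-1]$, so only $\mathrm{u}_1=\mathrm{b}_0+\mathrm{c}$ can equal $\mathrm{v}$. Writing $i'=|\mathrm{supp}(\mathrm{v})\cap\mathrm{S}|$ and $j'=|\mathrm{supp}(\mathrm{v})\setminus\mathrm{S}|$ (so $t=i'+j'$), enumerating the uniform choices of $(I,J)$, $\mathrm{R}$, $\mathrm{T}$, $\mathrm{b}$, and $\mathrm{c}$ gives
\begin{equation*}
\Pr[\mathrm{u}_1=\mathrm{v}] = \frac{P_{i',j'}}{\binom{M}{i'}\binom{K-M-1}{j'}(N-1)^{t}} = \frac{m_{i',j'}}{N^{K-M-1}(N-1)^{t}},
\end{equation*}
using~\eqref{eq:Pij}; since $m_{i,j}$ depends only on $i+j$ by~\eqref{eq:mij}, this is a function of $t$ alone.

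If $\mathrm{W}\in\mathrm{supp}(\mathrm{v})$, then $\mathrm{u}_1$ cannot equal $\mathrm{v}$, while exactly one $m^*=\pi^{-1}(\mathrm{v}(\mathrm{W}))\in[N-1]$ satisfies $\mathrm{a}_{m^*}(\mathrm{W})=\mathrm{v}(\mathrm{W})$; hence $\sum_{m=1}^{N-1}\Pr[\mathrm{u}_{m+1}=\mathrm{v}]=\Pr[\mathrm{u}_{m^*+1}=\mathrm{v}]$ with $m^*$ uniform on $[N-1]$. I would condition on $(\theta,I)$ and split into the sub-case $\mathrm{u}_{m^*+1}=\mathrm{a}_{m^*}+\mathrm{b}+\mathrm{c}$ (active only when $i'=M$, since $\mathrm{S}\subseteq\mathrm{supp}(\mathrm{v})$ is required) and the sub-case $\mathrm{u}_{m^*+1}=\mathrm{a}_{m^*}+\mathrm{b}_{m^*}+\mathrm{c}$ (active only when $\theta=1$, $m^*\leq I$, and $I=i'+1$). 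A key intermediate step is that, conditional on $I=i'+1$ and $m^*\leq I$, the probability that $\mathrm{R}_{m^*}$ equals any prescribed $i'$-subset of $\mathrm{S}$ is exactly $1/\binom{M}{i'}$, obtained by combining the uniform choice of $\mathrm{R}\subseteq\mathrm{S}$ with the uniform random ordering of its $(I-1)$-subsets. For $i'<M$ only the second sub-case contributes, and using $1-P_{i'+1}=1/(M-i')$ together with $\binom{M}{i'+1}/\binom{M}{i'}=(M-i')/(i'+1)$ collapses the sum to $m_{i'+1+j'}/[N^{K-M-1}(N-1)^{t}]$ with $t=1+i'+j'$, again a function of $t$ alone.

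The main obstacle is the sub-case $i'=M$. Summing the contributions over $m\in[N-1]$ and simplifying via the absorption identity $\binom{M}{I}(M+1)/(M-I+1)=\binom{M+1}{I}$ reduces the claim to the combinatorial identity
\begin{equation*}
\sum_{I=0}^{M}\left(N\binom{M}{I}-\binom{M+1}{I}\right)m_{I+j'} = m_{M+1+j'},
\end{equation*}
which by Pascal's rule $\binom{M+1}{I}=\binom{M}{I}+\binom{M}{I-1}$ is in turn equivalent to the recurrence $A(j'+1)=(N-1)A(j')$ with $A(j'):=\sum_{I=0}^{M}\binom{M}{I}m_{I+j'}$. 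I would prove $A(j')=(N-1)^{j'}$ via the generating function $\phi(z)=\sum_{s\geq 0}m_s z^s$, which from~\eqref{eq:mij} equals $1+(N-1)z^{M+1}/[(1-(N-1)z)(1+z)^M]$; then $A(j')=[z^{j'+M}]\phi(z)(1+z)^M=[z^{j'+M}]\left((1+z)^M+(N-1)z^{M+1}/(1-(N-1)z)\right)$, which evaluates to $(N-1)^{j'}$ for $j'\geq 1$ and to $1$ for $j'=0$; note that the non-uniform choice~\eqref{eq:PI} of $P_I$ is tailored precisely so that the above reduction succeeds. Combining both cases, $\Pr[\mathrm{v}_n=\mathrm{v}]=m_t/[N^{K-M}(N-1)^t]$ depends only on $|\mathrm{supp}(\mathrm{v})|$, hence is independent of $(\mathrm{W},\mathrm{S})$, establishing $I(\mathbf{W},\mathbf{S};\mathbf{Q}_n^{[\mathbf{W},\mathbf{S}]})=0$.
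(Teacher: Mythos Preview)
Your proposal is correct and follows the same overall architecture as the paper's proof: both reduce privacy to showing that $\Pr[\mathbf{Q}_n^{[\mathrm{W},\mathrm{S}]}=\mathrm{v}]$ depends only on $t=|\mathrm{supp}(\mathrm{v})|$, both split according to whether $\mathrm{W}\in\mathrm{supp}(\mathrm{v})$ and according to $r=|\mathrm{S}\cap\mathrm{supp}(\mathrm{v})|$ (your $i'$), and both arrive at the same crucial identity in the sub-case $i'=M$. Your simplification $(N-1)P_I+(N-1-I)(1-P_I)=N-(M+1)/(M-I+1)$ via the absorption identity is exactly the computation the paper performs to obtain $\binom{M}{i}(N-1)-\binom{M}{i-1}$ from the same expression; after Pascal's rule your displayed identity coincides with the paper's Lemma~\ref{lem:Summij}.

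Where you genuinely diverge is in the \emph{proof} of that identity. The paper establishes Lemma~\ref{lem:Summij} by telescoping and by invoking two auxiliary binomial lemmas (Lemmas~\ref{lem:SumPolyDeg} and~\ref{lem:Sum}), splitting further into the ranges $j\le M$ and $j>M$. You instead recast the identity as the recurrence $A(j'+1)=(N-1)A(j')$ with $A(j')=\sum_I\binom{M}{I}m_{I+j'}$ and verify $A(j')=(N-1)^{j'}$ in one stroke via the generating function $\phi(z)(1+z)^M=(1+z)^M+(N-1)z^{M+1}/(1-(N-1)z)$. This is a cleaner and more conceptual argument: it avoids the separate treatment of the regimes $s=M+1$, $M+1<s<K$, $s=K$ that the paper carries out, and it dispenses with the two auxiliary lemmas. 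The paper's route, on the other hand, is entirely elementary (no formal power series) and makes the role of the vanishing alternating sums more explicit. Either approach is valid; yours is shorter.

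One small point of presentation: your equation $\sum_{m=1}^{N-1}\Pr[\mathrm{u}_{m+1}=\mathrm{v}]=\Pr[\mathrm{u}_{m^*+1}=\mathrm{v}]$ is correct because the events $\{\mathrm{u}_{m+1}=\mathrm{v}\}$ are disjoint in $m$, but as written it reads like an identity between a sum and a single term. It would be clearer to state it as $\Pr[\exists\,m:\mathrm{u}_{m+1}=\mathrm{v}]$, which is how the paper implicitly treats it.
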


\begin{proof}
The proof is given in Section~\ref{subsec:PC}.
\end{proof}

\textbf{Answer Generation:} 
Given the query $\mathrm{Q}^{[\mathrm{W},\mathrm{S}]}_n= \mathrm{v}_n$, 
if the vector $\mathrm{v}_n$ is nonzero, 
server $n$ generates the sum of message sub-packets corresponding to the vector $\mathrm{v}_n$, 
i.e., 
$\mathrm{Y}_n = \sum_{i=1}^{K} \mathrm{X}_{i,\mathrm{v}_n(i)}$, 
and sends $\mathrm{Y}_n$ to the user as the answer $\mathrm{A}^{[\mathrm{W},\mathrm{S}]}_n$. 
Otherwise, if the vector $\mathrm{v}_n$ is all-zero, 
server $n$ does not provide the user with an answer, and 
the user sets $\mathrm{Y}_n = 0$. 

\begin{lemma}\label{lem:RC}
Every collection of answers $\{\mathrm{A}^{[\mathrm{W},\mathrm{S}]}_n\}$ generated by the proposed scheme satisfies the recoverability condition. 
\end{lemma}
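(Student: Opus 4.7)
The plan is to show, for every realization of the random choices made during query generation, that the user can linearly combine the $N$ answers with known quantities to recover each of the $N-1$ sub-packets $\mathrm{X}_{\mathrm{W},1},\dots,\mathrm{X}_{\mathrm{W},N-1}$ of the demand. Since the user itself generates the queries $\{\mathrm{v}_n\}_{n\in [N]}$ together with the permutation $\sigma$ mapping $\{\mathrm{u}_m\}$ to $\{\mathrm{v}_n\}$, it knows exactly which $\mathrm{u}_m$ each answer $\mathrm{Y}_n$ corresponds to, and I will therefore carry out the analysis directly in terms of the $\mathrm{u}_m$'s.

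The key intermediate step is to isolate the interference contribution $\mathrm{Z} := \sum_{i \in \mathrm{T}} \mathrm{X}_{i, \mathrm{c}(i)}$ from the answer associated with $\mathrm{u}_1 = \mathrm{b}_0 + \mathrm{c}$. By construction, the support $\mathrm{R}$ of $\mathrm{b}_0$ is contained in $\mathrm{S}$, so this answer takes the form $\sum_{i \in \mathrm{R}} \mathrm{X}_{i, \mathrm{b}_0(i)} + \mathrm{Z}$, where every term of the first sum is known to the user from its side information $\mathrm{X}_{\mathrm{S}}$ and the known vector $\mathrm{b}_0$. Subtracting this first sum yields $\mathrm{Z}$. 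The degenerate cases $I=0$ and $J=0$ simply force $\mathrm{b}_0$ or $\mathrm{c}$ to be zero; when $\mathrm{u}_1$ is all-zero no answer is sent, and $\mathrm{Z}=0$ trivially.

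Next, I will analyze the remaining $N-1$ answers in each of the two branches of $\theta$. When $\theta=0$, the query $\mathrm{u}_{m+1}=\mathrm{a}_m+\mathrm{b}+\mathrm{c}$ for $m\in [N-1]$ yields the answer $\mathrm{X}_{\mathrm{W},\pi(m)} + \sum_{i \in \mathrm{S}} \mathrm{X}_{i, \mathrm{b}(i)} + \mathrm{Z}$; both the side information sum (computed from $\mathrm{X}_{\mathrm{S}}$ and the known vector $\mathrm{b}$) and $\mathrm{Z}$ are subtracted to reveal $\mathrm{X}_{\mathrm{W},\pi(m)}$. When $\theta=1$, the same argument applies to $m\in [I+1:N-1]$; for $m\in [I]$, the query $\mathrm{u}_{m+1}=\mathrm{a}_m+\mathrm{b}_m+\mathrm{c}$ yields $\mathrm{X}_{\mathrm{W},\pi(m)} + \sum_{i \in \mathrm{R}_m} \mathrm{X}_{i, \mathrm{b}(i)} + \mathrm{Z}$, and since $\mathrm{R}_m \subseteq \mathrm{R} \subseteq \mathrm{S}$ by construction, the user can again subtract the known side information contribution and $\mathrm{Z}$ to obtain $\mathrm{X}_{\mathrm{W},\pi(m)}$.

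Because $\pi$ is a permutation of $[N-1]$, the collected sub-packets $\{\mathrm{X}_{\mathrm{W},\pi(m)}\}_{m\in [N-1]}$ exhaust all of $\mathrm{X}_{\mathrm{W},1},\dots,\mathrm{X}_{\mathrm{W},N-1}$, so $\mathrm{X}_{\mathrm{W}}$ is recoverable and the recoverability condition~\eqref{eq:RC} holds. I do not expect a substantive obstacle here, since the verification is purely algebraic once the right intermediate quantity $\mathrm{Z}$ is identified; the only care needed is the bookkeeping across the degenerate regimes ($I=0$, $J=0$, or $\mathrm{u}_1=\mathbf{0}$) and the observation that in the $\theta=1$ branch the inclusions $\mathrm{R}_m \subseteq \mathrm{R} \subseteq \mathrm{S}$ follow directly from how $\mathrm{b}_m$ is built from $\mathrm{b}_0$.
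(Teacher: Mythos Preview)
Your proposal is correct and follows essentially the same approach as the paper: both arguments use the answer corresponding to $\mathrm{u}_1=\mathrm{b}_0+\mathrm{c}$ to cancel the interference contribution from each of the other $N-1$ answers, then strip off the side-information part (supported in $\mathrm{S}$) to expose $\mathrm{X}_{\mathrm{W},\pi(m)}$. The only cosmetic difference is that you name the intermediate quantity $\mathrm{Z}$ explicitly, whereas the paper writes the single subtraction $\mathrm{Z}_{n+1}-\mathrm{Z}_1-\sum_i \mathrm{X}_{i,\mathrm{b}(i)}+\sum_i \mathrm{X}_{i,\mathrm{b}_0(i)}$ directly; algebraically these are identical.
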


\begin{proof}
The proof is given in Section~\ref{subsec:RC}.
\end{proof}

\textbf{Rate Calculation:} 
By definition, the rate is the ratio of $H(\mathbf{X}_{\mathrm{W}})$ to ${\sum_{n\in [N]} H(\mathbf{A}^{[\mathrm{W},\mathrm{S}]}_n|\mathbf{Q}^{[\mathrm{W},\mathrm{S}]}_n)}$. 
It is clear that $H(\mathbf{X}_{\mathrm{W}}) = L$ in the $q$-ary unit, 
because each message $\mathbf{X}_i$ is uniformly distributed over $\mathbbmss{F}_q^{L}$. 
We also observe that 
$\sum_{n\in [N]} H(\mathbf{A}^{[\mathrm{W},\mathrm{S}]}_n|\mathbf{Q}^{[\mathrm{W},\mathrm{S}]}_n) = \sum_{n\in [N]} H(\mathbf{Y}_n|\mathbf{v}_n)$. 
According to the answer generation algorithm used in the proposed scheme, 
if the vector $\mathbf{v}_n$ is all-zero, 
the answer $\mathbf{Y}_n$ is $0$, and 
if $\mathbf{v}_n$ is nonzero, 
$\mathbf{Y}_n$ is the sum of the message sub-packets  $\{\mathbf{X}_{i,\mathbf{v}_n(i)}\}_{i\in [K]}$. 
It is obvious that
${H(\mathbf{Y}_n|\mathbf{v}_n=0)} = 0$. 
Additionally, 
${H(\mathbf{Y}_n|\mathbf{v}_n\neq 0)} = L/(N-1)$. 
This is evident as $\mathbf{Y}_n$ is the sum of a (non-empty) set of non-degenerate message sub-packets---independent and uniformly distributed over $\mathbbmss{F}^{L/(N-1)}_q$.
Thus, 
$H(\mathbf{Y}_n|\mathbf{v}_n) = {\mathbb{P}(\mathbf{v}_n \neq 0)L/(N-1)}$.
Moreover, 
$\mathbb{P}(\mathbf{v}_n \neq 0) = 1-\mathbb{P}(\mathbf{v}_n =0) = 1-P_{0,0}/N =1-1/N^{K-M}$ (see Section~\ref{subsubsec:s0}).
By combining these, 
$H(\mathbf{Y}_n|\mathbf{v}_n) = (1-1/N^{K-M}) L/(N-1)$, 
which implies that 
$\sum_{n\in [N]} H(\mathbf{Y}_n|\mathbf{v}_n) = N(1-1/N^{K-M})L/(N-1)$, noting that $H(\mathbf{Y}_n|\mathbf{v}_n)$ remains constant for all ${n\in [N]}$. 
Thus, the rate of the proposed scheme is equal to ${(N-1)/(N-1/N^{K-M-1})}$, matching the rate $R$ in~\eqref{eq:R}. 

\section{Proofs of Lemmas~\ref{lem:PC} and~\ref{lem:RC}}\label{sec:Proofs}
We present the proof of Lemma~\ref{lem:PC} in Section~\ref{subsec:PC} and the proof of Lemma~\ref{lem:RC} in Section~\ref{subsec:RC}.  

\subsection{Proof of Lemma~\ref{lem:PC}}\label{subsec:PC}
The following lemma (Lemma~\ref{lem:Summij}) is useful in proving Lemma~\ref{lem:PC}, and its proof can be found in Appendix~\ref{sec:ProofsLem1Lem4}. 

\begin{lemma}\label{lem:Summij}
For any $1\leq j\leq K-M-1$,  
\[\sum_{i = (M-j+1)^{+}}^{M} \left(\binom{M}{i} (N-1) - \binom{M}{i-1} \right) m_{i,j} = m_{M,j+1},\]
where $m_{i,j}$'s are defined in~\eqref{eq:mij}. 
\end{lemma}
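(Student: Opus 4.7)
The plan is to reduce the identity to a one-line generating-function collapse. Set $r := N-1$ and $c_i := \binom{M}{i}r - \binom{M}{i-1}$, and observe from~\eqref{eq:mij} that $m_{i,j}$ depends on $(i,j)$ only through $p := i+j-M$, so introducing
\[
U_p := \sum_{k=0}^{p-1} (-1)^k \binom{M+k-1}{M-1} r^{p-k} \ (p \geq 1), \qquad U_p := 0 \ (p \leq 0),
\]
gives $m_{i,j} = U_{i+j-M}$ at every admissible pair. Since $U_{i+j-M} = 0$ whenever $i+j \leq M$, the lower bound $(M-j+1)^+$ in the statement is precisely the smallest $i \geq 0$ for which the summand can be nonzero, so the claim is equivalent to $\sum_{i=0}^{M} c_i\, U_{i+j-M} = U_{j+1}$ for $1 \leq j \leq K-M-1$. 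Using $c_{M+1} = -1$ to absorb the right-hand side into the sum, the identity to establish becomes
\[
\sum_{i=0}^{M+1} c_i\, U_{i+j-M} = 0, \qquad j \geq 1.
\]

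The key step is to identify the generating functions
\[
F(x) := \sum_{p \geq 0} U_p\, x^p, \qquad G(y) := \sum_{i=0}^{M+1} c_i\, y^i
\]
in closed form. Swapping the order of summation in the defining expression of $U_p$ and using the geometric series together with the negative-binomial identity $\sum_{k \geq 0} \binom{M+k-1}{M-1}(-x)^k = (1+x)^{-M}$, I expect $F(x) = rx/\bigl((1-rx)(1+x)^M\bigr)$; the closed form $G(y) = (r-y)(1+y)^M$ is then immediate from the binomial theorem and the definition of $c_i$. The target convolution $T_j := \sum_{i=0}^{M+1} c_i\, U_{i+j-M}$ becomes the coefficient of $x^j$ in $x^M F(x)\, G(1/x)$, and a direct simplification using $rx-1 = -(1-rx)$ collapses this product to the single constant $-r$. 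Hence $T_j = 0$ for every $j \geq 1$, and restoring the $i=M+1$ term $c_{M+1} U_{j+1} = -U_{j+1}$ yields the asserted identity.

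The main obstacle I anticipate is purely bookkeeping: confirming that $\sum_j T_j x^j$ really equals $x^M F(x)\, G(1/x)$ with no spurious contributions from $j < 0$ (this reduces to the observation that $U_{i+j-M} = 0$ whenever $j < 0$ and $0 \leq i \leq M+1$), and checking that the lower limit $(M-j+1)^+$ in the statement captures exactly the support of $U$ in both regimes $j \leq M$ and $j \geq M+1$. Once those ranges are matched, the algebraic collapse is a one-line calculation, and no further combinatorial manipulation is required.
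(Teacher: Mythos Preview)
Your proposal is correct and takes a genuinely different route from the paper. The paper's proof substitutes the definition of $m_{i,j}$ directly, splits into the two cases $j\leq M$ and $j>M$, reindexes so that the inner sum over $i$ becomes a telescoping sum, and then invokes the auxiliary Lemma~\ref{lem:Sum} (itself proved via Lemma~\ref{lem:SumPolyDeg}) to kill the residual alternating binomial sum. Your argument instead recognizes that $m_{i,j}$ depends only on $i+j-M$, packages everything into the two generating functions $F(x)=rx/\bigl((1-rx)(1+x)^M\bigr)$ and $G(y)=(r-y)(1+y)^M$, and observes that the product $x^M F(x)G(1/x)$ collapses algebraically to the constant $-r$, which forces every coefficient $T_j$ with $j\geq 1$ to vanish. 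This is cleaner: it avoids the case split on $j$ entirely, does not need either auxiliary lemma, and in fact proves the identity for all $j\geq 1$ rather than just $1\leq j\leq K-M-1$. The paper's approach, by contrast, is more elementary in that it stays with finite sums and never appeals to formal power series, and it reuses Lemma~\ref{lem:Sum}, which the paper needs anyway for the proof of Lemma~\ref{lem:Pij}. The only point worth tightening in your write-up is the Laurent-series bookkeeping you already flag: since $G(1/x)$ is a Laurent \emph{polynomial} and $F(x)$ is an honest power series with $U_0=0$, the product $x^M F(x)G(1/x)$ is a well-defined formal Laurent series with no negative powers, so the coefficient extraction is legitimate and the algebraic simplification to $-r$ transfers directly.
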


To satisfy the privacy condition, 
it is required that, 
for any query vector $\mathrm{v}$ generated by the proposed scheme and any server $n$, ${\mathbb{P}(\mathbf{Q}^{[\mathbf{W},\mathbf{S}]}_n=\mathrm{v}|\mathbf{W}=\mathrm{W},\mathbf{S} = \mathrm{S})}$, which we denote by
${\mathbb{P}(\mathbf{Q}^{[\mathrm{W},\mathrm{S}]}_n=\mathrm{v})}$ for brevity, 
remains constant 
for all ${\mathrm{W}\in [K]}$ and all $M$-subsets ${\mathrm{S}\subseteq [K]\setminus \{\mathrm{W}\}}$.  

Fix an arbitrary query vector $\mathrm{v}$ with a support size $s$. 
According to the query generation algorithm used in the proposed scheme, 
either $s=0$ or $M+1\leq s\leq K$. 
In the following, we analyze the cases of $s=0$, $s=M+1$, $M+1<s<K$, and $s=K$ separately.
In each case, we consider an arbitrary pair $(\mathrm{W},\mathrm{S})$, and 
show that ${\mathbb{P}(\mathbf{Q}^{[\mathrm{W},\mathrm{S}]}_n=\mathrm{v})}$ does not depend on $\mathrm{W}$ and $\mathrm{S}$. 
For simplicity, we denote ${|\mathrm{S}\cap \mathrm{supp}(\mathrm{v})|}$ by $r$ 
without explicitly referring to $\mathrm{S}$.

\vspace{0.125cm}
\subsubsection{$s=0$}\label{subsubsec:s0} 
Since $\mathrm{v}$ is all-zero in this case, it is immediate that 
$\mathbb{P}(\mathbf{Q}^{[\mathrm{W},\mathrm{S}]}_n=\mathrm{v}) = P_{0,0}\times 1/N = 1/{N^{K-M}}$, 
independent of $\mathrm{W}$ and $\mathrm{S}$. 
This is because $\mathrm{v}$ is all-zero only if $(\boldsymbol{I},\boldsymbol{J})=(0,0)$, which occurs with probability $P_{0,0} = 1/{N^{K-M-1}}$, and $\boldsymbol{\sigma}(1) = n$, which occurs with probability $1/N$. 

\vspace{0.125cm}
\subsubsection{$s=M+1$}
In this case, we need to consider two scenarios: 
$\mathrm{W}\not\in \mathrm{supp}(\mathrm{v})$ and $\mathrm{W}\in \mathrm{supp}(\mathrm{v})$. 

If $\mathrm{W}\not\in \mathrm{supp}(\mathrm{v})$, 
we have
${(2M+2-K)^{+}\leq r\leq M}$.
Since the vector $\mathrm{u}_1$ is the only one among     $\mathrm{u}_1,\dots,\mathrm{u}_{N}$ whose support does not contain $\mathrm{W}$, 
it readily follows that the event $\mathbf{Q}^{[\mathrm{W},\mathrm{S}]}_n=\mathrm{v}$ is equivalent to the set of events $(\boldsymbol{I},\boldsymbol{J}) = (r,M+1-r)$, 
$\boldsymbol{\theta}\in \{0,1\}$, 
$\mathbf{u}_1 = \mathrm{v}$, and $\boldsymbol{\sigma}(1) = n$. 
Since $\mathrm{u}_1 = \mathrm{b}_0+\mathrm{c}$, 
it follows that the event 
$\mathbf{u}_1 = \mathrm{v}$ is equivalent to the set of events 
(i) $\mathbf{b}_0(k) = \mathrm{v}(k)$ for all $k\in \mathrm{S}\cap \mathrm{supp}(\mathrm{v})$, 
or equivalently, 
$\mathbf{b}(k) = \mathrm{v}(k)$ for all $k\in \mathrm{S}\cap \mathrm{supp}(\mathrm{v})$ and  
$\mathbf{R} = \mathrm{S}\cap \mathrm{supp}(\mathrm{v})$, 
and
(ii) $\mathbf{c}(k) = \mathrm{v}(k)$ for all $k\in  \mathrm{supp}(\mathrm{v})\setminus \mathrm{S}$.
Thus, 
$\mathbb{P}(\mathbf{Q}^{[\mathrm{W},\mathrm{S}]}_n=\mathrm{v})$ is given by  
\begin{align}\label{eq:C1P1}
& P_{r,M+1-r} \times \frac{1}{(N-1)^{r}} \times \frac{1}{\binom{M}{r}} \nonumber \\ 
& \quad \times \frac{1}{(N-1)^{M+1-r}}\times \frac{1}{\binom{K-M-1}{M+1-r}}\times \frac{1}{N}.
\end{align} 

If $\mathrm{W}\in \mathrm{supp}(\mathrm{v})$, 
we observe that ${0\leq r\leq M}$. 
We analyze the cases of $0\leq r\leq M-1$ and $r=M$ separately. 

For each $0\leq r\leq M-1$, 
the event  
$\mathbf{Q}^{[\mathrm{W},\mathrm{S}]}_n = \mathrm{v}$ is equivalent to the set of events 
$(\boldsymbol{I},\boldsymbol{J}) = (r+1,M-r)$, 
$\boldsymbol{\theta}=1$, 
$\mathbf{u}_{m+1} = \mathrm{v}$, and 
$\boldsymbol{\sigma}(m+1) = n$ for some ${m\in [r+1]}$. 
Since $I = r+1\neq 0$, we observe that $\mathrm{u}_{m+1} = \mathrm{a}_{m} + \mathrm{b}_{m}+\mathrm{c}$ for all $m\in [r+1]$, and hence,  
for each $m\in [r+1]$,  
the event $\mathbf{u}_{m+1} = \mathrm{v}$ is equivalent to the following set of events: 
$\mathbf{a}_{m}(\mathrm{W}) = \mathrm{v}(\mathrm{W})$, or equivalently, $\boldsymbol{\pi}(m) = \mathrm{v}(\mathrm{W})$; 
${\mathbf{b}_{m}(k) = \mathrm{v}(k)}$ for all $k\in \mathrm{S}\cap \mathrm{supp}(\mathrm{v})$,
or equivalently, ${\mathbf{b}(k) = \mathrm{v}(k)}$ for all $k\in \mathrm{S}\cap \mathrm{supp}(\mathrm{v})$, 
$\mathbf{R} = \mathbf{R}_m \cup \{k\}$ for some $k\in \mathrm{S}\setminus \mathrm{supp}(\mathrm{v})$, and $\mathbf{R}_m = \mathrm{S}\cap \mathrm{supp}(\mathrm{v})$; 
and 
${\mathbf{c}(k) = \mathrm{v}(k)}$ for all ${k\in  \mathrm{supp}(\mathrm{v})\setminus \{\mathrm{W}\}\cup \mathrm{S}}$.
Putting these together, 
it follows that $\mathbb{P}(\mathbf{Q}^{[\mathrm{W},\mathrm{S}]}_n=\mathrm{v})$ is given by  
\begin{align}\label{eq:C1P2}
& P_{r+1,M-r} \times \frac{1}{M-r} \times (r+1) \times \frac{1}{N-1} \times \frac{1}{(N-1)^{r}} \nonumber \\ & \quad \times \frac{M-r}{\binom{M}{r+1}} \times \frac{1}{r+1} \times \frac{1}{(N-1)^{M-r}}\times \frac{1}{\binom{K-M-1}{M-r}}\times \frac{1}{N}.
\end{align} 

For $r=M$, it follows that  $\mathrm{supp}(
\mathrm{v})=\{\mathrm{W}\}\cup\mathrm{S}$, 
which readily implies that the event 
$\mathbf{Q}^{[\mathrm{W},\mathrm{S}]}_n = \mathrm{v}$ is equivalent to the set of events $(\boldsymbol{I},\boldsymbol{J}) = (0,0)$, 
$\boldsymbol{\theta}\in \{0,1\}$, 
$\mathbf{u}_{m+1} = \mathrm{v}$, and 
$\boldsymbol{\sigma}(m+1) = n$ for some ${m\in [N-1]}$.
Since $I=0$, it follows that  $\mathrm{u}_{m+1} = \mathrm{a}_{m} + \mathrm{b}+\mathrm{c}$ for all $m\in [N-1]$, which implies that, 
for each $m\in [N-1]$, 
the event $\mathbf{u}_{m+1} = \mathrm{v}$ is equivalent to the set of events  
$\mathbf{a}_{m}(\mathrm{W}) = \mathrm{v}(\mathrm{W})$, 
and 
$\mathbf{b}(k) = \mathrm{v}(k)$ for all $k\in \mathrm{S}$.
Thus, $\mathbb{P}(\mathbf{Q}^{[\mathrm{W},\mathrm{S}]}_n=\mathrm{v})$ is given by  
\begin{align}\label{eq:C1P3}
& P_{0,0} \times \frac{1}{N-1} \times \frac{1}{(N-1)^{M}} \times \frac{1}{N}.
\end{align}

To complete the proof, 
it suffices to show the equality of~\eqref{eq:C1P1},~\eqref{eq:C1P2}, and~\eqref{eq:C1P3}.
By applying~\eqref{eq:Pij}, it is easy to see that \eqref{eq:C1P1}-\eqref{eq:C1P3} can be simplified to $m_{r,M+1-r}/(N^{K-M}(N-1)^{M+1})$,
${m_{r+1,M-r}/(N^{K-M}(N-1)^{M+1})}$, and $1/(N^{K-M}(N-1)^{M})$, respectively. 
According to~\eqref{eq:mij}, it follows that 
${m_{i,j} = N-1}$ for all ${0\leq i\leq M}$ and ${0\leq j\leq K-M-1}$ such that ${i+j = M+1}$, 
which implies that 
$m_{r,M+1-r} = N-1$ for all $(2M+2-K)^{+}\leq r\leq M$, and 
$m_{r+1,M-r} = N-1$ for all $0\leq r\leq M-1$. 
Substituting these into the preceding expressions for~\eqref{eq:C1P1} and~\eqref{eq:C1P2} yields the equality of~\eqref{eq:C1P1}-\eqref{eq:C1P3}.   

\vspace{0.125cm}
\subsubsection{$M+1<s<K$}
Similar to the case of $s=M+1$, we consider two scenarios: 
${\mathrm{W}\not\in \mathrm{supp}(\mathrm{v})}$ and ${\mathrm{W}\in \mathrm{supp}(\mathrm{v})}$. 

If ${\mathrm{W}\not\in \mathrm{supp}(\mathrm{v})}$, 
we have
${(s-K+M+1)^{+}\leq r\leq M}$. 
Similarly as in the case of $s=M+1$, 
it can be shown that the event $\mathbf{Q}^{[\mathrm{W},\mathrm{S}]}_n=\mathrm{v}$ is equivalent to the set of events $(\boldsymbol{I},\boldsymbol{J}) = (r,s-r)$, 
$\boldsymbol{\theta}\in \{0,1\}$, 
$\mathbf{b}(k) = \mathrm{v}(k)$ for all $k\in \mathrm{S}\cap \mathrm{supp}(\mathrm{v})$, 
$\mathbf{R} = \mathrm{S}\cap \mathrm{supp}(\mathrm{v})$, 
$\mathbf{c}(k) = \mathrm{v}(k)$ for all $k\in \mathrm{supp}(\mathrm{v})\setminus \mathrm{S}$, and $\boldsymbol{\sigma}(1) = n$. 
Thus, 
$\mathbb{P}(\mathbf{Q}^{[\mathrm{W},\mathrm{S}]}_n=\mathrm{v})$ is given by  
\begin{align}\label{eq:C2P1}
& P_{r,s-r} \times \frac{1}{(N-1)^{r}} \times \frac{1}{\binom{M}{r}} \nonumber \\ 
& \quad \times \frac{1}{(N-1)^{s-r}}\times \frac{1}{\binom{K-M-1}{s-r}}\times \frac{1}{N}.
\end{align} 

If $\mathrm{W}\in \mathrm{supp}(\mathrm{v})$, we have ${(s-K+M)^{+}\leq r\leq M}$. 
Similar to the case of $s=M+1$, 
we analyze the cases of $(s-K+M)^{+}\leq r\leq M-1$ and $r=M$ separately. 

In the case of $(s-K+M)^{+}\leq r\leq M-1$, 
using the same arguments as in the case of ${s=M+1}$ and $0\leq r\leq M-1$, 
it can be shown that the event 
$\mathbf{Q}^{[\mathrm{W},\mathrm{S}]}_n = \mathrm{v}$ is equivalent to the set of events $(\boldsymbol{I},\boldsymbol{J}) = {(r+1,s-r-1)}$, 
$\boldsymbol{\theta}=1$, 
$\mathbf{a}_{m}(\mathrm{W}) = \mathrm{v}(\mathrm{W})$, 
${\mathbf{b}(k) = \mathrm{v}(k)}$ for all $k\in {\mathrm{S}\cap \mathrm{supp}(\mathrm{v})}$, 
$\mathbf{R} = \mathbf{R}_m \cup \{k\}$ for some $k\in \mathrm{S}\setminus \mathrm{supp}(\mathrm{v})$, 
$\mathbf{R}_m = \mathrm{S}\cap \mathrm{supp}(\mathrm{v})$, ${\mathbf{c}(k) = \mathrm{v}(k)}$ for all ${k\in  \mathrm{supp}(\mathrm{v})\setminus \{\mathrm{W}\}\cup \mathrm{S}}$, and 
$\boldsymbol{\sigma}(m+1) = n$, for some ${m\in [r+1]}$. 
Thus, $\mathbb{P}(\mathbf{Q}^{[\mathrm{W},\mathrm{S}]}_n=\mathrm{v})$ is given by  
\begin{align}\label{eq:C2P2}
& P_{r+1,s-r-1} \times \frac{1}{M-r} \times (r+1) \times \frac{1}{N-1} \times \frac{1}{(N-1)^{r}} \nonumber \\ & \quad \times \frac{M-r}{\binom{M}{r+1}} \times \frac{1}{r+1} \times \frac{1}{(N-1)^{s-r-1}}\times \frac{1}{\binom{K-M-1}{s-r-1}}\times \frac{1}{N}.
\end{align}

In the case of $r = M$, however, 
the analysis differs from that in the case of $s=M+1$ and $r=M$. 
In this case, 
there are two distinct event sets that can yield the event $\mathbf{Q}^{[\mathrm{W},\mathrm{S}]}_n = \mathrm{v}$. 

Note that $\mathrm{supp}(\mathrm{v}) = \{\mathrm{W}\}\cup \mathrm{S} \cup \mathrm{T}$, 
where $\mathrm{T}$ is a ${j:=s-M-1}$-subset of $[K]\setminus \{\mathrm{W}\}\cup \mathrm{S}$. 
Also, note that 
$M+1\leq i+j\leq K-M-1$ for any $(M-j+1)^{+}\leq i\leq M$. 
By using these results, 
it is easy to show that, 
for every $(M-j+1)^{+}\leq i\leq M$, the event $\mathbf{Q}^{[\mathrm{W},\mathrm{S}]}_n = \mathrm{v}$ occurs if and only if one of the following two sets of events occurs:\vspace{0.125cm} 

(i) $(\boldsymbol{I},\boldsymbol{J}) = (i,j)$, 
$\boldsymbol{\theta}=0$, 
$\mathbf{a}_{m}(\mathrm{W}) = \mathrm{v}(\mathrm{W})$, 
$\mathbf{b}(k) = \mathrm{v}(k)$ for all $k\in \mathrm{S}$, 
${\mathbf{c}(k) = \mathrm{v}(k)}$ for all ${k\in  \mathrm{supp}(\mathrm{v})\setminus \{\mathrm{W}\}\cup \mathrm{S}}$,
and 
$\boldsymbol{\sigma}(m+1) = n$, for some ${m\in [N-1]}$;\vspace{0.125cm}   

(ii) $(\boldsymbol{I},\boldsymbol{J}) = (i,j)$, 
$\boldsymbol{\theta}=1$, 
$\mathbf{a}_{m}(\mathrm{W}) = \mathrm{v}(\mathrm{W})$, 
$\mathbf{b}(k) = \mathrm{v}(k)$ for all $k\in \mathrm{S}$, 
${\mathbf{c}(k) = \mathrm{v}(k)}$ for all ${k\in  \mathrm{supp}(\mathrm{v})\setminus \{\mathrm{W}\}\cup \mathrm{S}}$,
and 
$\boldsymbol{\sigma}(m+1) = n$, for some ${m\in [i+1:N-1]}$.\vspace{0.125cm} 

Taking into account both event sets (i) and (ii), 
it becomes evident that  
$\mathbb{P}(\mathbf{Q}^{[\mathrm{W},\mathrm{S}]}_n=\mathrm{v})$ is given by
\begin{align}\label{eq:C2P3}
& \sum_{i=(M-j+1)^{+}}^{M} P_{i,j} \times \Big((N-1) P_i + (N-i-1) (1-P_{i})\Big) \nonumber \\
& \quad \times \frac{1}{N-1} \times \frac{1}{(N-1)^{M}} \times \frac{1}{(N-1)^{j}}\times \frac{1}{\binom{K-M-1}{j}}\times \frac{1}{N},
\end{align} 
for any $M+1<s<K$, where $j=s-M-1$. 

We need to show that~\eqref{eq:C2P1}-\eqref{eq:C2P3} are equal.  
By using~\eqref{eq:Pij}, 
we can rewrite~\eqref{eq:C2P1} and \eqref{eq:C2P2} as 
$m_{r,s-r}/(N^{K-M}(N-1)^s)$ and  
$m_{r+1,s-r-1}/(N^{K-M}(N-1)^s)$, respectively. 
Similar to the case of $s=M+1$, by using~\eqref{eq:mij}, 
it is immediate that 
$m_{r,s-r} = m_{M,s-M}$ for all $(s-K+M+1)^{+}\leq r\leq M$, 
noting that $r+(s-r) = s$ is fixed.
Similarly, 
$m_{r+1,s-r-1} = m_{M,s-M}$ for all $(s-K+M)^{+}\leq r\leq M-1$, noting that $(r+1)+(s-r-1) = s$ is fixed. 
This proves the equality of~\eqref{eq:C2P1} and~\eqref{eq:C2P2}. 
Since both~\eqref{eq:C2P1} and~\eqref{eq:C2P2} are equal to $m_{M,s-M}/(N^{K-M}(N-1)^s)$, 
it remains to 
show that~\eqref{eq:C2P3} is equal to $m_{M,s-M}/(N^{K-M}(N-1)^s)$.

By using~\eqref{eq:Pij} and~\eqref{eq:PI}, 
we can rewrite~\eqref{eq:C2P3} as 
\begin{equation}\label{eq:C2P3I1}
\sum_{i = (M-j+1)^{+}}^{M} \left(\binom{M}{i} (N-1) - \binom{M}{i-1} \right) \frac{m_{i,j}}{N^{K-M} (N-1)^{s}},
\end{equation}
noting that
\begin{dmath*}
\binom{M}{i}\Big((N-1)P_i+(N-i-1)(1-P_i)\Big) = \binom{M}{i}(N-1)-\binom{M}{i-1}
\end{dmath*} 
for all $0\leq i\leq M$.
By Lemma~\ref{lem:Summij}, 
\begin{equation*}
\sum_{i = (M-j+1)^{+}}^{M} \left(\binom{M}{i} (N-1) - \binom{M}{i-1} \right) m_{i,j} = m_{M,j+1}
\end{equation*} 
for any $1\leq j\leq K-M-1$.
By applying this result and replacing $j$ with $s-M-1$, 
it follows that~\eqref{eq:C2P3I1} is equal to $m_{M,s-M}/(N^{K-M}(N-1)^s)$ for any $M+1<s<K$, which was to be shown.  

\vspace{0.125cm}
\subsubsection{$s=K$}
In this case, $\mathrm{supp}(\mathrm{v}) = \{\mathrm{W}\}\cup \mathrm{S}\cup \mathrm{T}$, where $\mathrm{T} = [K]\setminus \{\mathrm{W}\}\cup \mathrm{S}$. 
Setting $j=K-M-1$ and using the same analysis as in the case with $M+1<s<K$ and $r=M$, 
it is easy to verify that, 
for every ${(M+1-j)^{+}\leq i\leq M}$, 
the event $\mathbf{Q}^{[\mathrm{W},\mathrm{S}]}_n = \mathrm{v}$ occurs if and only if one of the previously specified event sets (i) or (ii) occurs.
Accordingly, by substituting $j$ with ${K-M-1}$ and $s$ with $K$ in~\eqref{eq:C2P3I1}, 
it follows that 
$\mathbb{P}(\mathbf{Q}^{[\mathrm{W},\mathrm{S}]}_n=\mathrm{v})$ is given by 
\begin{align*}\label{eq:C3}
&\sum_{i = (2M-K+2)^{+}}^{M} \hspace{-0.175cm}\left(\binom{M}{i} (N-1) - \binom{M}{i-1} \right)\frac{m_{i,K-M-1}}{N^{K-M} (N-1)^{K}},
\end{align*} 
which remains constant, regardless of $\mathrm{W}$ and $\mathrm{S}$. 

\subsection{Proof of Lemma~\ref{lem:RC}}\label{subsec:RC}
We need to show that all the $N-1$ (non-degenerate) sub-packets of the demand message, 
$\{\mathrm{X}_{\mathrm{W},n}\}_{n\in [N-1]}$, can be recovered, given the answers $\mathrm{A}^{[\mathrm{W},\mathrm{S}]}_n = \mathrm{Y}_n$ for $n\in [N]$.  
We consider the cases of $\theta=0$ and $\theta=1$ separately. 

Let $\tau$ be the inverse of the permutation $\sigma$, 
i.e., $\mathrm{u}_n = \mathrm{v}_{\tau(n)}$ for all $n\in [N]$, and 
let $\mathrm{Z}_n:= \mathrm{Y}_{\tau(n)}$ for all $n\in [N]$. 

In the case of $\theta = 0$, 
we have 
\[\mathrm{a}_{n} = \mathrm{u}_{n+1}-\mathrm{u}_1 -\mathrm{b}+\mathrm{b}_0\] for all $n\in [N-1]$, which implies that 
\[\mathrm{X}_{\mathrm{W},\pi(n)}=\mathrm{Z}_{n+1}-\mathrm{Z}_{1}-\sum_{i=1}^{K} \mathrm{X}_{i,\mathrm{b}(i)}+\sum_{i=1}^{K} \mathrm{X}_{i,\mathrm{b}_0(i)}\] for all $n\in [N-1]$.
Note that $\sum_{i=1}^{K} \mathrm{X}_{i,\mathrm{b}(i)}$ and $\sum_{i=1}^{K} \mathrm{X}_{i,\mathrm{b}_0(i)}$ are known by the user. 
This is simply because $\mathrm{supp}(\mathrm{b})=\mathrm{S}$ and $\mathrm{supp}(\mathrm{b}_0) \subseteq \mathrm{S}$ by definition, and the messages $\mathrm{X}_{\mathrm{S}}$ form the user's side information. 
Thus, the user can recover the demand message sub-packets $\{\mathrm{X}_{\mathrm{W},\pi(n)}\}_{n\in [N-1]}$, 
given the servers' answers $\{\mathrm{Z}_n\}_{n\in [N]}$. 

In the case of $\theta = 1$, 
we have \[\mathrm{a}_{n} = \mathrm{u}_{n+1}-\mathrm{u}_1 -\mathrm{b}_n+\mathrm{b}_0\] for all $n\in [I]$, implying that 
\[\mathrm{X}_{\mathrm{W},\pi(n)}=\mathrm{Z}_{n+1}-\mathrm{Z}_{1}-\sum_{i=1}^{K} \mathrm{X}_{i,\mathrm{b}_n(i)}+\sum_{i=1}^{K} \mathrm{X}_{i,\mathrm{b}_0(i)}\] for all $n\in [I]$.  
Similar to the case of $\theta = 0$, 
it follows that $\sum_{i=1}^{K} \mathrm{X}_{i,\mathrm{b}_n(i)}$ and $\sum_{i=1}^{K} \mathrm{X}_{i,\mathrm{b}_0(i)}$ are known by the user since $\mathrm{supp}(\mathrm{b}_0) \subseteq \mathrm{S}$ and $\mathrm{supp}(\mathrm{b}_n)\subset \mathrm{S}$ for all $n\in [I]$, and the messages $\mathrm{X}_{\mathrm{S}}$ form the user's side information. 
Thus, the user can recover the sub-packets $\{\mathrm{X}_{\mathrm{W},\pi(n)}\}_{n\in [I]}$, given the servers' answers $\{\mathrm{Z}_n\}_{n\in [N]}$. 
The process of recovering the sub-packets $\{\mathrm{X}_{\mathrm{W},\pi(n)}\}_{n\in [I+1:N-1]}$ remains the same as that in the case of $\theta = 0$ and is omitted to avoid repetition. 

\bibliographystyle{IEEEtran}
\bibliography{PIR_PC_Refs}

\begin{appendices}

\section{An Illustrative Example}\label{sec:Example}
In this section, we present an illustrative example of the proposed scheme. 
Consider $N=4$ servers, storing identical copies of ${K=5}$ messages, where each message consists of ${N-1=3}$ sub-packets of equal size. 
Consider a user who knows $M=2$ of these messages  
and wants to retrieve another message from the message set. 
For ease of notation, 
we denote the demand message by $\mathrm{A}$, 
the two side information messages by $\mathrm{B}$ and $\mathrm{C}$, and 
the two interference messages by $\mathrm{D}$ and $\mathrm{E}$. 

{\renewcommand{\arraystretch}{1.85}
\begin{table*}[!t]
    \centering
    \caption{Different Types of Answer Sets and Their Corresponding Probabilities}\label{tab:1}
    \scalebox{0.96}{
    \begin{tabular}{|c|c|c|c|c|c|}
    \hline
    Type & $\mathrm{Z}_{1}$ & $\mathrm{Z}_{2}$ & $\mathrm{Z}_{3}$  & $\mathrm{Z}_{3}$  &  $P_{\{\mathrm{Z}_1,\mathrm{Z}_2,\mathrm{Z}_3,\mathrm{Z}_4\}}$ \\  
    \hline
    0 & $0$ & ${\color{black}\mathrm{A}_{\clubsuit}}+{\color{black}\mathrm{B}_i}+{\color{black}\mathrm{C}_j}$ & ${\color{black}\mathrm{A}_{\diamondsuit}}+{\color{black}\mathrm{B}_i}+{\color{black}\mathrm{C}_j}$ & ${\color{black}\mathrm{A}_{\spadesuit}}+{\color{black}\mathrm{B}_i}+{\color{black}\mathrm{C}_j}$ & $P_{0,0}\times (\frac{1}{3})^2\times \frac{1}{6} = \frac{1}{864}$ \\
    \hline
    1 & ${\color{black}\mathrm{B}_i}+\mathrm{D}_k+\mathrm{E}_l$ & ${\color{black}\mathrm{A}_{\clubsuit}}+{\color{black}\mathrm{B}_i}+{\color{black}\mathrm{C}_j}+\mathrm{D}_k+\mathrm{E}_l$ & ${\color{black}\mathrm{A}_{\diamondsuit}}+{\color{black}\mathrm{B}_i}+{\color{black}\mathrm{C}_j}+\mathrm{D}_k+\mathrm{E}_l$ & ${\color{black}\mathrm{A}_{\spadesuit}}+{\color{black}\mathrm{B}_i}+{\color{black}\mathrm{C}_j}+\mathrm{D}_k+\mathrm{E}_l$ & $P_{1,2}\times P_1 \times \frac{1}{2} \times (\frac{1}{3})^4\times \frac{1}{6} = \frac{1}{5184}$ \\
    \hline
    2 & ${\color{black}\mathrm{B}_i}+\mathrm{D}_k+\mathrm{E}_l$ & ${\color{black}\mathrm{A}_{\clubsuit}}+\mathrm{D}_k+\mathrm{E}_l$ & ${\color{black}\mathrm{A}_{\diamondsuit}}+{\color{black}\mathrm{B}_i}+{\color{black}\mathrm{C}_j}+\mathrm{D}_k+\mathrm{E}_l$ & ${\color{black}\mathrm{A}_{\spadesuit}}+{\color{black}\mathrm{B}_i}+{\color{black}\mathrm{C}_j}+\mathrm{D}_k+\mathrm{E}_l$ & $P_{1,2}\times (1-P_1) \times \frac{1}{2} \times (\frac{1}{3})^4\times \frac{1}{6} = \frac{1}{5184}$ \\
    \hline
    3 & ${\color{black}\mathrm{C}_j}+\mathrm{D}_k+\mathrm{E}_l$ & ${\color{black}\mathrm{A}_{\clubsuit}}+{\color{black}\mathrm{B}_i}+{\color{black}\mathrm{C}_j}+\mathrm{D}_k+\mathrm{E}_l$ & ${\color{black}\mathrm{A}_{\diamondsuit}}+{\color{black}\mathrm{B}_i}+{\color{black}\mathrm{C}_j}+\mathrm{D}_k+\mathrm{E}_l$ & ${\color{black}\mathrm{A}_{\spadesuit}}+{\color{black}\mathrm{B}_i}+{\color{black}\mathrm{C}_j}+\mathrm{D}_k+\mathrm{E}_l$ & $P_{1,2}\times P_1 \times \frac{1}{2} \times (\frac{1}{3})^4\times \frac{1}{6} = \frac{1}{5184}$ \\
    \hline
    4 & ${\color{black}\mathrm{C}_j}+\mathrm{D}_k+\mathrm{E}_l$ & ${\color{black}\mathrm{A}_{\clubsuit}}+\mathrm{D}_k+\mathrm{E}_l$ & ${\color{black}\mathrm{A}_{\diamondsuit}}+{\color{black}\mathrm{B}_i}+{\color{black}\mathrm{C}_j}+\mathrm{D}_k+\mathrm{E}_l$ & ${\color{black}\mathrm{A}_{\spadesuit}}+{\color{black}\mathrm{B}_i}+{\color{black}\mathrm{C}_j}+\mathrm{D}_k+\mathrm{E}_l$ & $P_{1,2}\times (1-P_1) \times \frac{1}{2} \times (\frac{1}{3})^4\times \frac{1}{6} = \frac{1}{5184}$ \\
    \hline
    5 & ${\color{black}\mathrm{B}_i}+{\color{black}\mathrm{C}_j}+\mathrm{D}_k$ & ${\color{black}\mathrm{A}_{\clubsuit}}+{\color{black}\mathrm{B}_i}+{\color{black}\mathrm{C}_j}+\mathrm{D}_k$ & ${\color{black}\mathrm{A}_{\diamondsuit}}+{\color{black}\mathrm{B}_i}+{\color{black}\mathrm{C}_j}+\mathrm{D}_k$ & ${\color{black}\mathrm{A}_{\spadesuit}}+{\color{black}\mathrm{B}_i}+{\color{black}\mathrm{C}_j}+\mathrm{D}_k$ & $P_{2,1}\times P_2 \times \frac{1}{2} \times (\frac{1}{3})^3\times \frac{1}{6} =0$ \\
    \hline
    6 & ${\color{black}\mathrm{B}_i}+{\color{black}\mathrm{C}_j}+\mathrm{D}_k$ & ${\color{black}\mathrm{A}_{\clubsuit}}+{\color{black}\mathrm{B}_i}+\mathrm{D}_k$ & ${\color{black}\mathrm{A}_{\diamondsuit}}+{\color{black}\mathrm{C}_j}+\mathrm{D}_k$ & ${\color{black}\mathrm{A}_{\spadesuit}}+{\color{black}\mathrm{B}_i}+{\color{black}\mathrm{C}_j}+\mathrm{D}_k$ & $P_{2,1}\times (1-P_2) \times \frac{1}{2} \times (\frac{1}{3})^3\times \frac{1}{6} = \frac{1}{864}$ \\
    \hline
    7 & ${\color{black}\mathrm{B}_i}+{\color{black}\mathrm{C}_j}+\mathrm{E}_l$ & ${\color{black}\mathrm{A}_{\clubsuit}}+{\color{black}\mathrm{B}_i}+{\color{black}\mathrm{C}_j}+\mathrm{E}_l$ & ${\color{black}\mathrm{A}_{\diamondsuit}}+{\color{black}\mathrm{B}_i}+{\color{black}\mathrm{C}_j}+\mathrm{E}_l$ & ${\color{black}\mathrm{A}_{\spadesuit}}+{\color{black}\mathrm{B}_i}+{\color{black}\mathrm{C}_j}+\mathrm{E}_l$ & $P_{2,1}\times P_2 \times \frac{1}{2} \times (\frac{1}{3})^3\times \frac{1}{6} = 0$ \\
    \hline
    8 & ${\color{black}\mathrm{B}_i}+{\color{black}\mathrm{C}_j}+\mathrm{E}_l$ & ${\color{black}\mathrm{A}_{\clubsuit}}+{\color{black}\mathrm{B}_i}+\mathrm{E}_l$ & ${\color{black}\mathrm{A}_{\diamondsuit}}+{\color{black}\mathrm{C}_j}+\mathrm{E}_l$ & ${\color{black}\mathrm{A}_{\spadesuit}}+{\color{black}\mathrm{B}_i}+{\color{black}\mathrm{C}_j}+\mathrm{E}_l$ & $P_{2,1}\times (1-P_2) \times \frac{1}{2} \times (\frac{1}{3})^3\times \frac{1}{6} = \frac{1}{864}$ \\
    \hline
    9 & ${\color{black}\mathrm{B}_i}+{\color{black}\mathrm{C}_j}+\mathrm{D}_k+\mathrm{E}_l$ & ${\color{black}\mathrm{A}_{\clubsuit}}+{\color{black}\mathrm{B}_i}+{\color{black}\mathrm{C}_j}+\mathrm{D}_k+\mathrm{E}_l$ & ${\color{black}\mathrm{A}_{\diamondsuit}}+{\color{black}\mathrm{B}_i}+{\color{black}\mathrm{C}_j}+\mathrm{D}_k+\mathrm{E}_l$ & ${\color{black}\mathrm{A}_{\spadesuit}}+{\color{black}\mathrm{B}_i}+{\color{black}\mathrm{C}_j}+\mathrm{D}_k+\mathrm{E}_l$ & $P_{2,2}\times P_2 \times (\frac{1}{3})^4\times \frac{1}{6} = 0$ \\
    \hline
    10 & ${\color{black}\mathrm{B}_i}+{\color{black}\mathrm{C}_j}+\mathrm{D}_k+\mathrm{E}_l$ & ${\color{black}\mathrm{A}_{\clubsuit}}+{\color{black}\mathrm{B}_i}+\mathrm{D}_k+\mathrm{E}_l$ & ${\color{black}\mathrm{A}_{\diamondsuit}}+{\color{black}\mathrm{C}_j}+\mathrm{D}_k+\mathrm{E}_l$ & ${\color{black}\mathrm{A}_{\spadesuit}}+{\color{black}\mathrm{B}_i}+{\color{black}\mathrm{C}_j}+\mathrm{D}_k+\mathrm{E}_l$ & $P_{2,2}\times (1-P_2) \times (\frac{1}{3})^4\times \frac{1}{6} = \frac{1}{2592}$ \\
    \hline
    \end{tabular}
    }
\end{table*}
}

Applying the proposed scheme in this example yields $11$ different types of answer sets $\{\mathrm{Z}_1,\mathrm{Z}_2,\mathrm{Z}_3,\mathrm{Z}_4\}$, where each $\mathrm{Z}_n$ for $n\in [4]$ is the answer of one of the servers to the user. 
Each type of answer set corresponds to specific values of $(I,J)$ and $\theta$, as defined in the proposed scheme.   
Table~\ref{tab:1} lists these $11$ types of answer sets and their corresponding probabilities, denoted by $P_{\{\mathrm{Z}_1,\mathrm{Z}_2,\mathrm{Z}_3,\mathrm{Z}_4\}}$, 
where  
$\mathrm{A}_{\clubsuit},\mathrm{A}_{\diamondsuit},\mathrm{A}_{\spadesuit}$ represent the three sub-packets of $\mathrm{A}$ in an arbitrary order, and 
$\mathrm{B}_i,\mathrm{C}_j,\mathrm{D}_k,\mathrm{E}_l$ each represent an arbitrary sub-packet of $\mathrm{B},\mathrm{C},\mathrm{D},\mathrm{E}$, respectively.
Note that each type contains multiple answer sets. 
In particular, there are $54$ answer sets of type~0 
(there are $6$ permutations of the sub-packets of $\mathrm{A}$, 
and 
$3$ choices each for $\mathrm{B}_i$ and $\mathrm{C}_j$);  
$486$ answer sets for each of types 1, 2, 3, and 4; 
$162$ answer sets for each of types 5, 6, 7, and 8; and 
$486$ answer sets for each of types~9 and 10.

The probability of each answer set depends on the values of $(I,J)$ and $\theta$ specifying the type of the answer set. 
In particular, given $(I,J)$ and $\theta$, 
the probability of an answer set $\{\mathrm{Z}_1,\mathrm{Z}_2,\mathrm{Z}_3,\mathrm{Z}_4\}$ of the corresponding type is given by
\[P_{\{\mathrm{Z}_1,\mathrm{Z}_2,\mathrm{Z}_3,\mathrm{Z}_4\}} = P_{I,J}\times P \times \frac{1}{\binom{2}{I}}\times \frac{1}{\binom{2}{J}} \times \left(\frac{1}{3}\right)^{2+J} \times \frac{1}{6},\] 
where $P=P_I$ (or $1-P_I$) when $\theta=0$ (or $1$). 
For this example, $P_{0,0} = 1/16$, $P_{1,2} = P_{2,1}=3/8$, and $P_{2,2} = 3/16$ by~\eqref{eq:Pij}, and $P_1=1/2$ and $P_2 = 0$ by~\eqref{eq:PI}. 
Note that ${{1}/{\binom{2}{I}}\times {1}/{\binom{2}{J}}}$ is the probability of selecting $I$ specific side information messages and $J$ specific interference messages; 
$(1/3)^{2+J}$ is the probability of choosing one specific sub-packet from each of the side information messages and 
the selected $J$ interference messages; 
and 
$1/6$ is the probability of selecting a specific permutation of the sub-packets of the demand message. 
Also, note that the total probability of all answer sets is equal to $54/864+4\times 486/5184+2\times 162/864+486/2592 = 1$. 

As evident from the table, the recoverability condition is met, as the demand message sub-packets, $\mathrm{A}_{\clubsuit},\mathrm{A}_{\diamondsuit},\mathrm{A}_{\spadesuit}$, can be recovered from any answer set of any type.  
To show that the privacy condition is met, 
we need to show that all answers with the same number of sub-packets have an equal probability. 

First, consider the answers $\mathrm{A}_m+\mathrm{B}_i+\mathrm{C}_j$ for $m\in [3]$. 
From the table, it is immediate that these answers have probability $6\times 1/864 = 1/144$. 
Now, consider the answer $\mathrm{B}_i+\mathrm{D}_k+\mathrm{E}_l$. 
This answer appears in two types of answer sets: types 1 and 2.
Each of these types contains $3\times 6=18$ answer sets, each including $\mathrm{B}_i+\mathrm{D}_k+\mathrm{E}_l$, corresponding to one of the choices of $C_j$ and one of the permutations of $\{\mathrm{A}_1,\mathrm{A}_2,\mathrm{A}_3\}$. 
Since each answer set of these types has probability $1/{5184}$, 
the probability of the answer $\mathrm{B}_i+\mathrm{D}_k+\mathrm{E}_l$ is $2\times 18\times 1/{5184} = 1/{144}$. 
Similarly, the probability of the answer $\mathrm{C}_j+\mathrm{D}_k+\mathrm{E}_l$ is $1/{144}$. 
Next, consider the answer $\mathrm{B}_i+\mathrm{C}_j+\mathrm{D}_k$. 
This answer appears in two types of answer sets: types 5 and 6. 
Each type contains $6$ answer sets, each including $\mathrm{B}_i+\mathrm{C}_j+\mathrm{D}_k$, corresponding to one of the permutations of $\{\mathrm{A}_1,\mathrm{A}_2,\mathrm{A}_3\}$. 
Since answer sets of type 5 have zero probability and answer sets of type 6 have probability $1/{864}$, 
the probability of the answer $\mathrm{B}_i+\mathrm{C}_j+\mathrm{D}_k$ is $6\times 1/{864} = 1/144$. 
Similarly, the probability of the answer $\mathrm{B}_i+\mathrm{C}_j+\mathrm{E}_l$ is $1/{144}$.

Now, consider the answers $\mathrm{A}_m+\mathrm{D}_k+\mathrm{E}_l$ for $m\in [3]$. 
Each of these answers appears in two types of answer sets: types 2 and 4. 
Each type contains $3\times 6 = 18$ answer sets, 
each including $\mathrm{A}_m+\mathrm{D}_k+\mathrm{E}_l$, 
corresponding to one of the choices of $B_i$ or $C_j$ and one of the permutations of $\{\mathrm{A}_1,\mathrm{A}_2,\mathrm{A}_3\}$. 
Since answer sets of these types have probability $1/{5184}$, 
the probability of each answer $\mathrm{A}_m+\mathrm{D}_k+\mathrm{E}_l$ for $m\in [3]$ is $2\times 18\times 1/{5184} = 1/{144}$.
Lastly, consider the answers $\mathrm{A}_m+\mathrm{B}_i+\mathrm{D}_k$ (or $\mathrm{A}_m+\mathrm{B}_i+\mathrm{E}_l$) for $m\in [3]$. 
Each of these answers appears in $3\times 2=6$ answer sets of type 6 (or type 8), 
each corresponding to one of the choices of $C_j$ and one of the permutations of $\{\mathrm{A}_1,\mathrm{A}_2,\mathrm{A}_3\}\setminus \{\mathrm{A}_m\}$. 
Thus, the probability of each answer $\mathrm{A}_m+\mathrm{B}_i+\mathrm{D}_k$ (or $\mathrm{A}_m+\mathrm{B}_i+\mathrm{E}_l$) for $m\in [3]$ is $6\times 1/{864} = 1/144$.
Similarly, the answers 
$\mathrm{A}_m+\mathrm{B}_i+\mathrm{E}_l$ (or $\mathrm{A}_m+\mathrm{C}_j+\mathrm{E}_l$) for $m\in [3]$ have probability $1/144$.

The above analysis verifies that all answers with $3$ sub-packets have an equal probability of $1/144$. 
Using a similar analysis, one can also verify that all answers with $4$ (or $5$) sub-packets have an equal probability of $1/432$ (or $13/1296$).  
These results confirm that the privacy condition is satisfied. 

The download cost of each answer set of type $0$ is $3/3 = 1$ (i.e., the user downloads a total of $3$ coded sub-packets from the servers, in order to retrieve the $3$ sub-packets of the demand message), whereas that of each answer set of any other type is $4/3$. 
Since there are $54$ answer sets of type $0$, each with probability $1/864$, 
the rate of the proposed scheme is equal to $1/(54/864+(1-54/864)\times 4/3) = 16/21$, which matches the rate $R$ defined in~\eqref{eq:R} for $N=4$, $K=5$, and $M=2$. 

\section{Proofs of Lemmas~\ref{lem:Pij} and~\ref{lem:Summij}}\label{sec:ProofsLem1Lem4}
In this section, we present the proofs of Lemmas~\ref{lem:Pij} and~\ref{lem:Summij}. 
Before stating the proofs, we present two key lemmas (Lemmas~\ref{lem:SumPolyDeg} and~\ref{lem:Sum}) 
that are useful in the proofs.  

\begin{lemma}\label{lem:SumPolyDeg}
For any polynomial $P(k)$ of degree less than $n$, 
\[\sum_{k=0}^{n} (-1)^k \binom{n}{k} P(k) = 0.\]
\end{lemma}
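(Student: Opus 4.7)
The plan is to exploit linearity: since the map $P \mapsto \sum_{k=0}^{n}(-1)^k\binom{n}{k}P(k)$ is a linear functional on the vector space of polynomials of degree less than $n$, it suffices to verify the identity on a convenient basis of that space. A natural choice is the binomial basis $\{\binom{k}{j} : 0 \le j \le n-1\}$, which spans all polynomials in $k$ of degree less than $n$.

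First, I would fix $j$ with $0 \le j \le n-1$ and evaluate the sum
\[
\sum_{k=0}^{n}(-1)^k \binom{n}{k}\binom{k}{j}.
\]
Applying the trinomial revision identity $\binom{n}{k}\binom{k}{j} = \binom{n}{j}\binom{n-j}{k-j}$ pulls a factor of $\binom{n}{j}$ out of the sum, and reindexing by $i = k-j$ yields
\[
\binom{n}{j}(-1)^j \sum_{i=0}^{n-j}(-1)^i \binom{n-j}{i}.
\]
The inner sum is $(1-1)^{n-j}$ by the binomial theorem, which vanishes because $n-j \ge 1$. This disposes of every basis element, and by linearity the identity holds for every polynomial $P$ of degree less than $n$.

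The main step is really just recognizing that the claim is equivalent to the fact that the $n$-th forward difference $\Delta^n P$ of a polynomial of degree less than $n$ is identically zero; the binomial-basis calculation above is the cleanest way to make this explicit without invoking extra machinery. There is no serious obstacle, since all the identities used are elementary and the reduction to a basis is immediate from linearity of the sum in $P$.
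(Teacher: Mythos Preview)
Your proof is correct, but it takes a different route from the paper's. Both arguments begin by exploiting linearity to reduce to a basis, but the paper chooses the monomial basis $\{1,k,\dots,k^{n-1}\}$ and proceeds by induction: it differentiates the binomial identity $(1+x)^n=\sum_k\binom{n}{k}x^k$ repeatedly, sets $x=-1$ to obtain $\sum_k(-1)^k\binom{n}{k}k(k-1)\cdots(k-i+1)=0$, and then uses the induction hypothesis on lower powers to extract $\sum_k(-1)^k\binom{n}{k}k^i=0$. You instead work directly in the binomial basis $\{\binom{k}{j}\}$ and use trinomial revision plus $(1-1)^{n-j}=0$ to handle every basis element in one stroke, with no induction. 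In effect, the paper's differentiation step already establishes the vanishing on the falling-factorial basis (which is your binomial basis up to constants), so its subsequent induction back to monomials is superfluous; your version is the streamlined form of the same idea. The payoff of your approach is brevity and the avoidance of calculus; the paper's approach, while slightly longer, may be more immediately recognizable to a reader who thinks in terms of generating functions.
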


\begin{proof}
Any polynomial $P(k)$ of degree less than $n$ can be written as a linear combination of $1,k,\dots,k^{n-1}$. 
Thus, it suffices to show that $\sum_{k=0}^n (-1)^k \binom{n}{k} k^i = 0$ for all ${i\in [0:n-1]}$.

By the binomial identity, we have $\sum_{k=0}^{n} \binom{n}{k} x^k = (1+x)^n$. 
Setting $x=-1$, it follows that $\sum_{k=0}^{n} (-1)^k \binom{n}{k} = 0$, which proves the result for $i=0$. 
Fix an arbitrary $i\in [n-1]$. 
Assume $\sum_{k=0}^n (-1)^k \binom{n}{k} k^j = 0$ for all $j\in [0:i-1]$. 
We need to show that 
$\sum_{k=0}^n (-1)^k \binom{n}{k} k^i = 0$. 

Taking the derivative of both sides of the binomial identity $i$ times with respect to $x$, 
\[\sum_{k=0}^{n} \binom{n}{k} P_i(k) x^{k-i} = {n(n-1)\cdots(n-i+1) (1+x)^{n-i}},\] 
where $P_i(k) := k(k-1)\cdots (k-i+1)$.  
Setting $x=-1$, 
\begin{equation}\label{eq:A1}
\sum_{k=0}^{n}  (-1)^{k} \binom{n}{k} P_i(k) = 0.
\end{equation}

Since $P_i(k)$ is a polynomial in $k$ of degree $i$, it can be expressed as a linear combination of $1,k,\dots,k^{i-1},k^{i}$, say, $P_i(k) = \sum_{j=0}^{i-1} a_j k^j + k^i$. 
Substituting $P_i(k)$ into~\eqref{eq:A1}, we have 
\begin{equation}\label{eq:A2}
\sum_{j=0}^{i-1} \sum_{k=0}^{n} (-1)^k \binom{n}{k} a_j k^j + \sum_{k=0}^{n} (-1)^k \binom{n}{k} k^i= 0.
\end{equation}
Since the double sum equals zero by assumption,  
the second sum also equals zero, 
as was to be shown. 
\end{proof}

\begin{lemma}\label{lem:Sum}
For any $1\leq j\leq K-M-1$, \[\sum_{k=(j-M)^{+}}^{j} (-1)^k \binom{M}{j-k}\binom{M+k-1}{k} = 0.\] 
\end{lemma}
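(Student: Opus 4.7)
The plan is to recognize the sum as a generating-function convolution that comes out to zero for a trivial reason. Recall the binomial identity
\[
(1+x)^M = \sum_{k\geq 0} \binom{M}{k} x^k,
\]
together with the negative binomial series
\[
(1+x)^{-M} = \sum_{k\geq 0} (-1)^k \binom{M+k-1}{k} x^k.
\]
Since the product of these two series is identically $1$, the coefficient of $x^j$ in the product is $0$ for every $j\geq 1$.

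First I would write that coefficient explicitly via the Cauchy product:
\[
[x^j]\bigl((1+x)^M \cdot (1+x)^{-M}\bigr) = \sum_{k=0}^{j} (-1)^k \binom{M+k-1}{k}\binom{M}{j-k}.
\]
Next I would reconcile the summation range with the one in the lemma. Since $\binom{M}{j-k}=0$ whenever $j-k>M$ (i.e.\ $k<j-M$) and also vanishes for $k>j$, the nonzero terms of the sum live in $(j-M)^{+}\leq k\leq j$, which is precisely the range in the lemma's statement. Hence the sum in the lemma equals $[x^j]\,1$, which is $0$ because $j\geq 1$ under the hypothesis $1\leq j\leq K-M-1$.

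The argument is essentially a one-liner once the two generating functions are written down, so I do not expect any real obstacle. The only thing to be careful about is justifying that the truncation of the summation range at $k=(j-M)^{+}$ in the lemma does not discard any nonzero contribution; this is immediate from the vanishing of $\binom{M}{j-k}$ outside $0\leq j-k\leq M$. An alternative route, if one preferred to avoid formal power series, would be to induct on $j$ using Pascal's rule $\binom{M+k-1}{k}=\binom{M+k-2}{k}+\binom{M+k-2}{k-1}$ to split the sum into two telescoping pieces, but the generating-function proof is cleaner and avoids bookkeeping.
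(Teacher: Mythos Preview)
Your proof is correct and is a genuinely different---and shorter---argument than the paper's. The paper proceeds by case analysis. For $1\leq j\leq M$ it rewrites the summand via
\[
\binom{M}{j-k}\binom{M+k-1}{k}=\frac{M}{j}\binom{j}{k}\binom{M+k-1}{j-1}
\]
and then invokes Lemma~\ref{lem:SumPolyDeg} (the identity $\sum_{k=0}^{n}(-1)^k\binom{n}{k}P(k)=0$ for $\deg P<n$), observing that $\binom{M+k-1}{j-1}$ is a polynomial in $k$ of degree $j-1$. For $M+1\leq j\leq K-M-1$ it performs a shift of index to reduce the sum to $\sum_{k=0}^{M}(-1)^k\binom{M}{k}\binom{k+j-1}{M-1}$ and again applies Lemma~\ref{lem:SumPolyDeg}. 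Your generating-function argument dispenses with both the case split and the auxiliary Lemma~\ref{lem:SumPolyDeg}: the identity $(1+x)^M(1+x)^{-M}=1$ handles all $j\geq 1$ at once, and the truncation of the range at $(j-M)^{+}$ is, as you note, harmless because $\binom{M}{j-k}$ already vanishes there. The paper's route has the modest advantage that Lemma~\ref{lem:SumPolyDeg} is reused elsewhere in Appendix~\ref{sec:ProofsLem1Lem4} (in the proof of Lemma~\ref{lem:Pij}), so it costs nothing extra; but for this lemma in isolation your approach is cleaner.
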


\begin{proof}
For any $1\leq j\leq M$, we need to show that \[\sum_{k=0}^{j} (-1)^k \binom{M}{j-k}\binom{M+k-1}{k} = 0.\] 
It is easy to verify that \[\binom{M}{j-k}\binom{M+k-1}{k} = \frac{M}{j}\binom{j}{k}\binom{M+k-1}{j-1}.\]  
Thus, 
\begin{align}\label{eq:Lem5eq1}
& \sum_{k=0}^{j} (-1)^k \binom{M}{j-k}\binom{M+k-1}{k} \nonumber \\
& = \frac{M}{j}\sum_{k=0}^{j} (-1)^k \binom{j}{k}\binom{M+k-1}{j-1}.
\end{align}
Since $\binom{M+k-1}{j-1}$ is a polynomial in $k$ of degree $j-1$, 
the RHS of~\eqref{eq:Lem5eq1} is equal to zero by the result of Lemma~\ref{lem:SumPolyDeg}.

For any $M+1\leq j\leq K-M-1$, we need to show that \[\sum_{k=j-M}^{j} (-1)^k \binom{M}{j-k}\binom{M+k-1}{k} = 0.\] 
Substituting $k$ with $k-j+M$, it is easy to verify that it suffices to show that $\sum_{k=0}^{M} (-1)^k \binom{M}{k}\binom{k+j-1}{M-1} = 0$. 
Note that $\binom{k+j-1}{M-1}$ is a polynomial in $k$ of degree $M-1$. 
Thus, $\sum_{k=0}^{M} (-1)^k \binom{M}{k}\binom{k+j-1}{M-1} = 0$ by Lemma~\ref{lem:SumPolyDeg}.
\end{proof}

\subsection{Proof of Lemma~\ref{lem:Pij}}
First, we show that for $P_{i,j}$'s defined in~\eqref{eq:Pij}, 
$P_{i,j}\geq 0$ for all $0\leq i\leq M$ and $0\leq j\leq K-M-1$. 
This is equivalent to show that $m_{i,j}\geq 0$ for all $i$ and $j$. 
By definition, $m_{0,0}=1$, and $m_{i,j}=0$ for all $i$ and $j$ such that $1\leq i+j\leq M$. 
Thus, it remains to show that $m_{i,j}\geq 0$ for all $i$ and $j$ such that $M+1\leq i+j\leq K-1$. 
According to~\eqref{eq:mij},
we need to show that ${\sum_{k=0}^{s} (-1)^k \binom{\alpha+k}{k} \beta^{-k}\geq 0}$ for all $s\geq 0$, where $\alpha:=M-1$, $\beta:=N-1$, and $s:=i+j-M-1$.    

Note that 
$\sum_{k=0}^{s} (-1)^k \binom{\alpha+k}{k} \beta^{-k}$ can be expressed as 
\[\sum_{l=0}^{\frac{s}{2}-1} \beta^{-2l} \left[\binom{\alpha+2l}{2l}-\binom{\alpha+2l+1}{2l+1} \beta^{-1}\right] + \binom{\alpha+s}{s}\beta^{-s}\] for even $s$, and 
\[\sum_{l=0}^{\frac{s-1}{2}} \beta^{-2l} \left[\binom{\alpha+2l}{2l}-\binom{\alpha+2l+1}{2l+1} \beta^{-1}\right]\] for odd $s$. 
Thus, it suffices to show that \[\binom{\alpha+2l}{2l}-\binom{\alpha+2l+1}{2l+1} \beta^{-1}\geq 0\] for all ${l\geq 0}$. 
This is evident because
\begin{align*}
& \binom{\alpha+2l}{2l}-\binom{\alpha+2l+1}{2l+1} \beta^{-1} \\ 
& \quad = \binom{\alpha+2l}{2l} \left(\beta-\frac{\alpha}{l+1}-1\right)\beta^{-1},
\end{align*}
and 
$\beta-\frac{\alpha}{l+1}-1\geq 0$ for all $l\geq 0$, 
noting that 
${\alpha\geq 0}$, ${\beta\geq 1}$, and 
${\beta\geq \alpha+1}$,  
since ${M\geq 1}$, ${N>1}$, and ${N\geq M+1}$. 
This completes the proof of the first part of Lemma~\ref{lem:Pij}. 

Next, we prove the second part of Lemma~\ref{lem:Pij}. 
We need to show that 
$\sum_{i=0}^{M}\sum_{j=0}^{K-M-1} P_{i,j} = 1$. 
According to~\eqref{eq:Pij}, this is equivalent to show that $\sum_{i=0}^{M}\sum_{j=0}^{K-M-1} m_{i,j} = N^{K-M-1}$. 

First, consider the case of ${K-M-1\leq M}$. 
In this case, we need to show that 
\begin{equation}\label{eq:SSS1}
1+\sum_{j=1}^{K-M-1}\binom{K-M-1}{j} \sum_{i=M-j+1}^{M}\binom{M}{i} m_{i,j}  = N^{K-M-1},
\end{equation}
noting that $m_{0,0} = 1$ and $m_{i,j} = 0$ for all $i,j$ such that ${1\leq i+j\leq M}$. 
To prove~\eqref{eq:SSS1}, 
it suffices to show that 
\begin{equation}\label{eq:SS1}
\sum_{i=M-j+1}^{M}\binom{M}{i} m_{i,j} = (N-1)^j    
\end{equation}
for any $1\leq j\leq K-M-1$. 
This is because by using~\eqref{eq:SS1}, 
we can rewrite~\eqref{eq:SSS1} as \[1+\sum_{j=1}^{K-M-1}\binom{K-M-1}{j} (N-1)^j = N^{K-M-1},\] which follows directly from the binomial identity. 

For all $1\leq j\leq K-M-1$ and $M-j+1\leq i\leq M$, 
we have $M+1\leq i+j\leq M+j\leq K-1$. 
Thus, by the definition of $m_{i,j}$, 
$\sum_{i=M-j+1}^{M}\binom{M}{i} m_{i,j}$ is given by
\begin{align}\label{eq:SS1E1}
& \sum_{i=M-j+1}^{M} \binom{M}{i} \sum_{k=0}^{i+j-M-1} (-1)^k \nonumber \\
& \quad \times \binom{M+k-1}{k} (N-1)^{i+j-M-k}.
\end{align} 
Substituting $i$ with $M-i$, we can rewrite~\eqref{eq:SS1E1} as
\begin{align}\label{eq:SS1E2}
\sum_{i=0}^{j-1}\binom{M}{i} \sum_{k=0}^{j-i-1} (-1)^k \binom{M+k-1}{k} (N-1)^{j-i-k}.
\end{align}
Letting 
$l_k:=(-1)^k \binom{M+k-1}{k} (N-1)^{j-i-k}$, we can readily simplify~\eqref{eq:SS1E2} as
\[\sum_{i=0}^{j-1} \sum_{k=0}^{j-i-1} \binom{M}{i} l_k,\] 
or equivalently, \begin{equation}\label{eq:SS1I1}
\sum_{i=0}^{j-1} \sum_{k=0}^{j-i} \binom{M}{i} l_k - \sum_{i=0}^{j-1} \binom{M}{i} l_{j-i}
\end{equation} 
by adding and subtracting the term $l_{j-i} = (-1)^{j-i}\binom{M+j-i-1}{j-i}$. 
Thus, we have
\begin{align}\label{eq:SS1E3}
\sum_{i=M-j+1}^{M}\binom{M}{i} m_{i,j} = \sum_{i=0}^{j-1} \sum_{k=0}^{j-i} \binom{M}{i} l_k - \sum_{i=0}^{j-1} \binom{M}{i} l_{j-i}.
\end{align}
Note that
\[\sum_{i=0}^{j-1} \binom{M}{i} l_{j-i} = \sum_{i=0}^{j} \binom{M}{i} l_{j-i} - \binom{M}{j}l_0.\] 
Moreover, 
\[\sum_{i=0}^{j} \binom{M}{i} l_{j-i} = \sum_{i=0}^{j} (-1)^{j-i} \binom{M}{i}\binom{M+j-i-1}{j-i},\] 
which can be rewritten as 
\begin{align*}
\sum_{i=0}^{j} \binom{M}{i} l_{j-i} = \sum_{k=0}^{j} (-1)^{k} \binom{M}{j-k}\binom{M+k-1}{k}
\end{align*} 
by substituting $j-i$ with $k$. 
Since the RHS is equal to zero by Lemma~\ref{lem:Sum}, 
the LHS is also equal to zero,
implying that 
\[\sum_{i=0}^{j-1} \binom{M}{i} l_{j-i} = -\binom{M}{j}l_0.\] 
Thus, we can rewrite~\eqref{eq:SS1E3} as
\begin{equation}\label{eq:SS1I2}
\sum_{i=M-j+1}^{M}\binom{M}{i} m_{i,j} = \sum_{i=0}^{j-1} \sum_{k=0}^{j-i} \binom{M}{i} l_k + \binom{M}{j}l_0.
\end{equation}
Since 
\[\sum_{i=0}^{j-1} \sum_{k=0}^{j-i} \binom{M}{i} l_k = \sum_{i=0}^{j} \sum_{k=0}^{j-i} \binom{M}{i} l_k - \binom{M}{j}l_0,\] 
we can simplify~\eqref{eq:SS1I2} to 
\begin{equation}\label{eq:SS1I3}
\sum_{i=M-j+1}^{M}\binom{M}{i} m_{i,j} = \sum_{i=0}^{j} \sum_{k=0}^{j-i} \binom{M}{i} l_k.
\end{equation}

Comparing~\eqref{eq:SS1} and~\eqref{eq:SS1I3}, it is evident that we need to show $\sum_{i=0}^{j} \sum_{k=0}^{j-i} \binom{M}{i} l_k = (N-1)^j$, or equivalently,  
\begin{equation}\label{eq:SS1I4}
\sum_{i=0}^{j} \sum_{k=0}^{j-i} (-1)^k \binom{M}{i}\binom{M+k-1}{k} (N-1)^{-i-k} = 1
\end{equation} 
for any $1\leq j\leq K-M-1$. 

Substituting $k$ with $k-i$, 
the LHS of~\eqref{eq:SS1I4} can be expressed as  
\[\sum_{i=0}^{j} \sum_{k=i}^{j} (-1)^{k-i} \binom{M}{i}\binom{M+k-i-1}{k-i} (N-1)^{-k},\] 
or equivalently, 
\[\sum_{k=0}^{j} \sum_{i=0}^{k} (-1)^{k-i} \binom{M}{i}\binom{M+k-i-1}{k-i} (N-1)^{-k}\]
by reordering the sums. 
Substituting $i$ with $k-i$ and interchanging the variables $i$ and $k$, 
we can rewrite this as 
\[\sum_{i=0}^{j} \sum_{k=0}^{i} (-1)^{k} \binom{M}{i-k}\binom{M+k-1}{k} (N-1)^{-i}.\] 
It is easy to see that 
\begin{align*}
& \sum_{i=0}^{j} \sum_{k=0}^{i} (-1)^{k} \binom{M}{i-k}\binom{M+k-1}{k} (N-1)^{-i} \\
& \quad = 1 + \sum_{i=1}^{j} \sum_{k=0}^{i} (-1)^{k} \binom{M}{i-k}\binom{M+k-1}{k} (N-1)^{-i}.
\end{align*} 
Thus, in order to prove~\eqref{eq:SS1I4}, 
it remains to show that 
\begin{align}\label{eq:SS1E4}
\sum_{i=1}^{j} \sum_{k=0}^{i} (-1)^{k} \binom{M}{i-k}\binom{M+k-1}{k} (N-1)^{-i} = 0.
\end{align}
For any $1\leq i\leq j$, 
we have $1\leq i\leq K-M-1$ and $(i-M)^{+} = 0$ 
since $j\leq K-M-1$ and $K-M-1\leq M$. 
Thus, applying  Lemma~\ref{lem:Sum} yields
\[\sum_{k=0}^{i} (-1)^{k} \binom{M}{i-k}\binom{M+k-1}{k} = 0,\] 
which readily implies~\eqref{eq:SS1E4}, 
thereby completing the proof of~\eqref{eq:SS1}, and subsequently,~\eqref{eq:SSS1}. 

Next, consider the case of $K-M-1>M$. 
In this case, we need to show that 
\begin{align}\label{eq:SSS2}
& 1+\sum_{j=1}^{M}\binom{K-M-1}{j} \sum_{i=M-j+1}^{M}\binom{M}{i} m_{i,j} \nonumber \\
& + \sum_{j=M+1}^{K-M-1}\binom{K-M-1}{j} \sum_{i=0}^{M}\binom{M}{i} m_{i,j} = N^{K-M-1}.
\end{align}
According to~\eqref{eq:SS1}, 
we have $\sum_{i=M-j+1}^{M} \binom{M}{i} m_{i,j} = (N-1)^j$ for any $1\leq j\leq M$ ($<K-M-1$). 
Thus, in order to prove~\eqref{eq:SSS2}, 
it remains to show that 
\begin{equation}\label{eq:SS2}
\sum_{i=0}^{M}\binom{M}{i} m_{i,j} = (N-1)^j 
\end{equation} 
for any $M+1\leq j\leq K-M-1$. 
This is because by applying~\eqref{eq:SS1} and~\eqref{eq:SS2}, the LHS of~\eqref{eq:SSS2} can be rewritten as 
\begin{dmath*}
1+\sum_{j=1}^{M} \binom{K-M-1}{j} (N-1)^j +\sum_{j=M+1}^{K-M-1} \binom{K-M-1}{j} (N-1)^j,
\end{dmath*}
or equivalently, 
\[1+\sum_{j=1}^{K-M-1} \binom{K-M-1}{j} (N-1)^j,\] 
which is equal to the RHS of~\eqref{eq:SSS2}, i.e., $N^{K-M-1}$, 
by the binomial identity.  

By the definition of $m_{i,j}$, 
$\sum_{i=0}^{M}\binom{M}{i} m_{i,j}$ is given by 
\begin{align}\label{eq:SS2E1}
& \sum_{i=0}^{M}\binom{M}{i} \sum_{k=0}^{i+j-M-1} (-1)^k \binom{M+k-1}{k} (N-1)^{i+j-M-k}
\end{align} 
Letting $l_k:=(-1)^k \binom{M+k-1}{k} (N-1)^{i+j-M-k}$, 
we can rewrite~\eqref{eq:SS2E1} as
\[\sum_{i=0}^{M}\binom{M}{i} \sum_{k=0}^{i+j-M-1} l_k,\] 
or equivalently, 
\begin{align}\label{eq:SS2E2}
\sum_{i=0}^{M}\binom{M}{i} \sum_{k=0}^{i+j-M} l_k - \sum_{i=0}^{M} \binom{M}{i} l_{i+j-M}.
\end{align}
Thus, we have 
\begin{align}\label{eq:SS2E3}
\sum_{i=0}^{M}\binom{M}{i} m_{i,j} = \sum_{i=0}^{M}\binom{M}{i} \sum_{k=0}^{i+j-M} l_k - \sum_{i=0}^{M} \binom{M}{i} l_{i+j-M}.
\end{align}

Note that 
\[\sum_{i=0}^{M} \binom{M}{i} l_{i+j-M} = \sum_{i=0}^{M} (-1)^{i+j-M} \binom{M}{i} \binom{i+j-1}{M-1},\]
which can be rewritten as 
\[\sum_{i=0}^{M} \binom{M}{i} l_{i+j-M} = \sum_{k=j-M}^{j} (-1)^{k} \binom{M}{j-k} \binom{M+k-1}{k}\]
by substituting $i$ with $j-k$. 
Since $(j-M)^{+} = j-M$ for any $M+1\leq j\leq K-M-1$, 
the RHS is equal to zero by Lemma~\ref{lem:Sum}, 
implying that the LHS is also equal to zero,
\[\sum_{i=0}^{M} \binom{M}{i} l_{i+j-M}=0.\]
Thus, we can simplify~\eqref{eq:SS2E3} to 
\begin{equation}\label{eq:SS2I1}
\sum_{i=0}^{M}\binom{M}{i} m_{i,j} 
= \sum_{i=0}^{M}\binom{M}{i} \sum_{k=0}^{i+j-M} l_k.
\end{equation}

Comparing~\eqref{eq:SS2} and~\eqref{eq:SS2I1}, it is evident that we need to show that $\sum_{i=0}^{M}\binom{M}{i} \sum_{k=0}^{i+j-M} l_k = (N-1)^j$, or equivalently,
\begin{equation}\label{eq:SS2I2}
\sum_{i=0}^{M} \sum_{k=0}^{i+j-M} (-1)^k \binom{M}{i}\binom{M+k-1}{k} (N-1)^{i-M-k} = 1
\end{equation} 
for any $M+1\leq j\leq K-M-1$. 

Substituting $k$ with $i+j-M-k$, 
the LHS of~\eqref{eq:SS2I2} becomes
\[\sum_{i=0}^{M} \sum_{k=0}^{i+j-M} (-1)^{i+j-M-k} \binom{M}{i}\binom{i+j-k-1}{M-1} (N-1)^{k-j}.\] 
By reordering the sums, 
we can rewrite this expression as the sum of two double sums: 
\begin{align*}
& \sum_{k=0}^{j-M-1} \sum_{i=0}^{M} (-1)^{i+j-M-k} \binom{M}{i} \\ 
& \quad \times \binom{i+j-k-1}{M-1} (N-1)^{k-j},
\end{align*}
and 
\begin{dmath*}
\sum_{k=j-M}^{j} \sum_{i=M-j+k}^{M} (-1)^{i+j-M-k} \binom{M}{i} \times \binom{i+j-k-1}{M-1} (N-1)^{k-j}.
\end{dmath*}

We can express the first double sum as 
\[\sum_{i=0}^{j-M-1}\hspace{-0.15cm} (-1)^{j-M-i} (N-1)^{i-j} \sum_{k=0}^{M} (-1)^{k} \binom{M}{k}\binom{k+j-i-1}{M-1}\] 
by interchanging the variables $i$ and $k$. 
Since $\binom{k+j-i-1}{M-1}$ is a polynomial in $k$ of degree $M-1$, the inner sum is equal to zero by Lemma~\ref{lem:SumPolyDeg}, 
which implies that the first double sum is equal to zero. 

Substituting $i$ with $i+M-j+k$ and 
interchanging the variables $i$ and $k$, 
the second double sum can be rewritten as 
\[\sum_{i=j-M}^{j} (N-1)^{i-j} \sum_{k=0}^{j-i} (-1)^{k} \binom{M}{j-i-k}\binom{M+k-1}{k},\] 
or equivalently, 
\[1+\sum_{i=j-M}^{j-1} (N-1)^{i-j} \sum_{k=0}^{j-i} (-1)^{k} \binom{M}{j-i-k}\binom{M+k-1}{k}.\]
For any $j-M\leq i\leq j-1$, 
we have ${1\leq j-i<K-M-1}$ and 
${(j-i-M)^{+} = 0}$ since $j-i\leq M$ and $M<K-M-1$. 
Thus, 
using Lemma~\eqref{lem:Sum}, 
the inner sum is equal to zero, 
implying that the second double sum is equal to $1$. 
This directly implies~\eqref{eq:SS2I2}, 
which completes the proof of~\eqref{eq:SS2}, and consequently,~\eqref{eq:SSS2}. 

\subsection{Proof of Lemma~\ref{lem:Summij}}
Fix an arbitrary $1\leq j\leq K-M-1$. 
We need to show that 
\begin{align}\label{eq:L0}
\sum_{i = (M-j+1)^{+}}^{M} \left(\binom{M}{i} (N-1) - \binom{M}{i-1} \right) m_{i,j} = m_{M,j+1}
\end{align} 
where 
\begin{align}\label{eq:mijnew}
m_{i,j} = \sum_{k=0}^{i+j-M-1}(-1)^k \binom{M+k-1}{k}(N-1)^{i+j-M-k},
\end{align} 
since $M+1\leq i+j\leq K-1$ for all $(M-j+1)^{+}\leq i\leq M$. 
We analyze the cases of $j\leq M$ and $j>M$ separately. 

First, consider the case of $j\leq M$. 
The LHS of~\eqref{eq:L0} can be rewritten as 
\begin{align}\label{eq:L1}
& \hspace{-0.25cm} \sum_{i = M-j+1}^{M} \sum_{k=0}^{i+j-M-1} (-1)^k \binom{M+k-1}{k} \nonumber \\ 
& \quad \times (N-1)^{i+j-M-k}  \left(\binom{M}{i} (N-1) - \binom{M}{i-1} \right),   
\end{align} 
by using~\eqref{eq:mijnew} and noting that $(M-j+1)^{+} = M-j+1$ for $j\leq M$. 
Substituting $i$ with $i+M-j$ and reordering the sums, 
we can rewrite~\eqref{eq:L1} as 
\begin{align}\label{eq:L2}
& \sum_{k = 0}^{j-1} (-1)^k \binom{M+k-1}{k} (N-1)^{-k} \nonumber \\
& \quad \times \sum_{i=k+1}^{j} \left(\binom{M}{j-i} (N-1)^{i+1} - \binom{M}{j-i+1}(N-1)^i \right).
\end{align}
Since the inner sum is a telescoping sum,   
simplifying to ${(N-1)^{j+1} - \binom{M}{j-k}(N-1)^{k+1}}$, 
we can express~\eqref{eq:L2} as 
\begin{align*}
& \sum_{k = 0}^{j} (-1)^k \binom{M+k-1}{k} (N-1)^{j-k+1} \nonumber \\
& \quad - \sum_{k = 0}^{j} (-1)^k \binom{M}{j-k} \binom{M+k-1}{k} (N-1),
\end{align*} 
by adding and subtracting the term $(-1)^j\binom{M+j-1}{j} (N-1)$. 
According to~\eqref{eq:mijnew}, 
the first sum is equal to $m_{M,j+1}$, i.e., the RHS of~\eqref{eq:L0}, and  
the second sum is equal to zero
by the result of Lemma~\ref{lem:Sum}, 
noting that $(j-M)^{+} = 0$ 
for $j\leq M$.  
This completes the proof for the case of $j\leq M$. 

Next, consider the case of $j>M$. 
By applying~\eqref{eq:mijnew} and noting that $(M-j+1)^{+} = 0$ for $j>M$, 
we can rewrite the LHS of~\eqref{eq:L0} as
\begin{align}\label{eq:L4}
& \sum_{i = 0}^{M} \sum_{k=0}^{i+j-M-1} (-1)^k \binom{M+k-1}{k} \nonumber \\ 
& \quad \times (N-1)^{i+j-M-k}  \left(\binom{M}{i} (N-1) - \binom{M}{i-1} \right).   
\end{align} 
Substituting $i$ with $i+M-j$, 
reordering the sums, 
and rearranging terms, 
we can express~\eqref{eq:L4} as the sum of the following two double sums: 
\begin{align*}
& \sum_{k = 0}^{j-M-1} (-1)^k \binom{M+k-1}{k} (N-1)^{-k} \nonumber \\
& \quad \times \sum_{i=j-M}^{j} \left(\binom{M}{j-i} (N-1)^{i+1} - \binom{M}{j-i+1}(N-1)^i \right), 
\end{align*} and 
\begin{align*}
& \sum_{k = j-M}^{j-1} (-1)^k \binom{M+k-1}{k} (N-1)^{-k} \nonumber \\
& \quad \times \sum_{i=k+1}^{j} \left(\binom{M}{j-i} (N-1)^{i+1} - \binom{M}{j-i+1}(N-1)^i \right). 
\end{align*}
Since both inner sums are telescoping sums, with the first one simplifying to $(N-1)^{j+1} $ and the second one to ${(N-1)^{j+1}-\binom{M}{j-k}(N-1)^{k+1}}$, 
we can rewrite the first double sum as 
\begin{align}\label{eq:L5}
& \sum_{k = 0}^{j-M-1} (-1)^k \binom{M+k-1}{k} (N-1)^{j-k+1}, 
\end{align} and the second one as 
\begin{align}\label{eq:L6}
& \sum_{k = j-M}^{j} (-1)^k \binom{M+k-1}{k} (N-1)^{j-k+1} \nonumber \\
& \quad - \sum_{k = j-M}^{j} (-1)^k \binom{M}{j-k} \binom{M+k-1}{k} (N-1), 
\end{align} 
by adding and subtracting the term $(-1)^j\binom{M+j-1}{j} (N-1)$. 
Summing~\eqref{eq:L5} and~\eqref{eq:L6}, 
we can express~\eqref{eq:L4} as 
\begin{align*}
& \sum_{k = 0}^{j} (-1)^k \binom{M+k-1}{k} (N-1)^{j-k+1} \nonumber \\
& \quad - \sum_{k = j-M}^{j} (-1)^k \binom{M}{j-k} \binom{M+k-1}{k} (N-1). 
\end{align*} 
According to~\eqref{eq:mijnew}, 
the first sum is equal to 
$m_{M,j+1}$, i.e., the RHS of~\eqref{eq:L0}, and 
the second sum is equal to zero due to Lemma~\ref{lem:Sum}, 
noting that $(j-M)^{+} = j-M$ 
for $j>M$.  
This completes the proof for the case of $j>M$. 

\end{appendices}

\end{document}